\newtheorem{fact}[theorem]{Fact}
\newcommand{\RN}{{Radio Network}\xspace}
\newcommand{\RNs}{{Radio Networks}\xspace}
\newcommand{\SNs}{{Sensor Networks}\xspace}
\newcommand{\KS}{$k$-selection\xspace}
\newcommand{\sKS}{static \KS}
\newcommand{\AB}{all-broadcast\xspace}
\newcommand{\OFA}{\textsc{One-fail Adaptive}\xspace}
\newcommand{\LFA}{\textsc{Log-fails Adaptive}\xspace}
\newcommand{\EBOBO}{\textsc{Exp Back-on/Back-off}\xspace}
\newcommand{\LLIBO}{\textsc{Loglog-iterated Back-off}\xspace}
\begin{document}

\mainmatter              
\title{Unbounded Contention Resolution in\\ Multiple-Access Channels
\thanks{A brief announcement of this work has been presented at PODC 2011. This work is supported in part by the Comunidad de Madrid grant S2009TIC-1692, Spanish MICINN grant TIN2008--06735-C02-01,
and National Science Foundation grant CCF-0937829.}}
\titlerunning{}  
%
\author{
   Antonio Fern\'andez Anta\inst{1}
   \and
   Miguel~A.~Mosteiro\inst{2,3}\\
   \and
   Jorge Ram\'on Mu\~noz\inst{3}
}
\authorrunning{A. Fern\'andez Anta, M. A. Mosteiro, and J. R. Mu\~noz}   
\institute{
Institute IMDEA Networks, Madrid, Spain\\
\email{antonio.fernandez@imdea.org}
\and
Department of Computer Science, Rutgers University, Piscataway, NJ, USA\\ 
\email{mosteiro@cs.rutgers.edu}
\and
LADyR, GSyC, Universidad Rey Juan Carlos, M\'ostoles, Spain\\
\email{jorge.ramon@madrimasd.net}
} 

\maketitle              

\newcommand{\footnotenonumber}[1]{{\def\thempfn{}\footnotetext{#1}}}


\begin{abstract} 
A frequent problem in settings where a unique resource must be shared among users is how to resolve the contention that arises when all of them must use it, but the resource allows only for one user each time.
The application of efficient solutions for this problem spans a myriad of settings such as radio communication networks or databases.
\textcolor{black}{
For the case where the number of users is unknown, 
recent work has yielded fruitful results for local area networks and radio networks, although either 
a (possibly loose) upper bound on the number of users needs to be known~\cite{FM:kSelJournal},
or the solution is suboptimal~\cite{bendercontention},
or it is only implicit~\cite{GL:fattrees} or embedded~\cite{FCM:gossiping} in other problems, with bounds proved only asymptotically. 
In this paper, under the assumption that collision detection or information on the number of contenders is not available, 
we present a novel protocol for contention resolution in radio networks, and we recreate a protocol previously used for other problems~\cite{GL:fattrees, FCM:gossiping}, tailoring the constants for our needs.
In contrast with previous work, both protocols are proved to be optimal up to a small constant factor and with high probability for big enough number of contenders.
Additionally, the protocols are evaluated and contrasted with the previous work by extensive simulations. The evaluation shows that the complexity bounds obtained by the analysis are rather tight, and that both protocols proposed
have small and predictable complexity for many system sizes (unlike previous proposals).
}
\end{abstract}

%
%

\section{Introduction}
\label{section:intro}

The topic of this work is the resolution of contention in settings where an unknown number of users must access a single shared resource, but multiple simultaneous accesses are not feasible.
The scope of interest in this problem is wide, ranging from radio and local area networks to databases and transactional memory. (See~\cite{bendercontention} and the references therein.) 

A common theme in protocols used for this problem is the adaptive adjustment of some user variable that reflects its eagerness in trying to access the shared resource. Examples of such variable are the probability of transmitting a message in a radio network or the frequency of packet transmission in a local area network.
When such adjustment reduces (resp. increases) the contention, the technique is called \emph{back-off} (resp. \emph{back-on}). Combination of both methods are called \emph{back-on/back-off}. Protocols used may be further characterized by the rate of adjustment. E.g., \emph{exponential back-off}, \emph{polynomial back-on}, etc.
In particular, exponential back-off is widely used and it has proven to be efficient in practical applications where statistical arrival of contenders is expected. Nevertheless, worst case arrival patterns, such as bursty or \emph{batched} arrivals, are frequent~\cite{L:etherTraff,G:etherTraff}. 

A technique called \emph{\LLIBO} was shown to be within a sublogarithmic factor from optimal with high probability in~\cite{bendercontention}.~\footnote{For $k$ contenders, we define \emph{with high probability} to mean with probability at least $1-1/k^c$ for some constant $c>0$.}
The protocol was presented in the context of packet contention resolution in local area networks for batched arrivals. Later on, also for batched arrivals, we presented a back-on/back-off protocol in~\cite{FM:kSelJournal}, instantiated in the \KS problem in \RNs (defined in Section~\ref{section:prelim}). The latter protocol, named here \emph{\LFA,} is asymptotically optimal for any significant probability of error, but additionally requires that some upper bound (possibly loose) on the number of contenders is known. In the present paper, we remove such requirement. 
\textcolor{black}{
In particular, we present and analyze a protocol that we call \emph{\OFA} for \KS in \RNs. 
We also recreate and analyze another protocol for \KS, called here \emph{\EBOBO}, which was previously embedded in protocols for other problems and analyzed only asymptotically~\cite{GL:fattrees, FCM:gossiping}.
}
\textcolor{black}{Our analysis shows that  
\OFA and \EBOBO, both of independent interest,} resolve contention among an unknown and unbounded~\footnote{We use the term unbounded to reflect that not even an upper bound on the number of contenders is known. This should not be confused with the infinitely-many users model where there are countably infinitely many stations.~\cite{chlebusRNsurvey}} number of contenders with high probability in optimal time up to constants. 
Additionally, by means of simulations, we evaluate and contrast the average performance of all four protocols. The simulations show that the complexity bounds obtained in the analysis (with high probability) for these protocols are rather tight \textcolor{black}{for the input sizes considered.} Additionally, they show that they are faster that \LLIBO and more predictable for all network sizes than \LFA. 

\paragraph{Roadmap}
The rest of the paper is organized as follows. 
In the following section the problem, model, related work and results are detailed. 
\textcolor{black}{
In Section~\ref{section:ofa}, we introduce \OFA and its analysis.
\EBOBO is detailed and analyzed in Section~\ref{section:ebobo}.
}
The results of the empirical contrast of all four protocols is given in Section~\ref{section:eval} and we finish with concluding remarks and open problems in Section~\ref{section:conclude}. 

\section{Preliminaries}
\label{section:prelim}

A well-studied example of unique-resource contention is the problem of broadcasting information in a multiple-access channel. A multiple-access channel is a synchronous system that allows a message to be delivered to many recipients at the same time using a channel of communication but, due to the shared nature of the channel, the simultaneous introduction of messages from multiple sources produce a conflict that precludes any message from being delivered to any recipient. The particular model of multiple-access channel we consider here is the \RN, a model of communication network where the channel is contended (even if radio communication is not actually used~\cite{chlebusRNsurvey}). We first precise our model of \RN as follows.

\paragraph{The Model:} 
We consider a \RN comprised of {$n$} stations called \emph{nodes}. 
Each node is assumed to be potentially reachable from any other node in one communication step, hence, the network is characterized as \emph{single-hop} or \emph{one-hop} indistinctively.
Before running the protocol, nodes have no information, not even the number of nodes $n$ or their own label.
Time is supposed to be slotted in \emph{communication steps}. 
Assuming that the computation time-cost is negligible in comparison with the communication time-cost, time efficiency  is studied in terms of communication steps only. 
The piece of information assigned to a node in order to deliver it to other nodes is called a \emph{message}. 
The assignment of a message is due to an external agent and such an event is called a \emph{message arrival}.
Communication among nodes is {carried out by means of} radio broadcast on a shared channel.
If exactly one node transmits at a communication step, such a transmission is called \emph{successful} or \emph{non-colliding}, we say that the message was \emph{delivered}, and all other nodes \emph{receive} such a message. 
If more than one message is transmitted at the same time, a \emph{collision} occurs, the messages are garbled, and nodes only receive \emph{interference noise}. If no message is transmitted in a communication step, nodes receive only \emph{background noise}. 
In this work, nodes can not distinguish between interference noise and background noise, thus, the channel is called \emph{without collision detection}.
Each node is in one of two states, \emph{active} if it holds a message to deliver,  or \emph{idle} otherwise.
As in~\cite{bendercontention,K:selection,GL:fattrees}, we assume that a node becomes idle upon delivering its message, for instance when 
an explicit acknowledgement is received (like in the IEEE 802.11 Medium Access Control protocol~\cite{BMJ:contentionWindow}). For settings where the channel does not provide such functionality, 
such as \SNs, a hierarchical infrastructure may be predefined to achieve it~\cite{FCM:gossiping}, or a leader can be elected as the 
node responsible for acknowledging successful transmissions \cite{nakano:leader}.

One of the problems that require contention resolution in \RNs is the problem known in the literature as \emph{\AB}~\cite{chlebusRNsurvey}, or \emph{\KS}~\cite{K:selection}. In \KS, a set of $k$ out of $n$ network nodes have to access a unique shared channel of communication, each of them at least once. 
As in~\cite{bendercontention,K:selection,GL:fattrees}, in this paper we study \KS when all messages arrive simultaneously, or in a \emph{batch}. Under this assumption the \KS problem is called \emph{static}. 
A \emph{dynamic} counterpart where messages arrive at different times was also studied~\cite{K:selection}.

\paragraph{The Problem:}
Given a \RN where $k$ network nodes are activated by a message that arrives simultaneously to all of them, the \emph{\sKS} problem is solved when each node has delivered its message.

\paragraph{Related Work:}

\textcolor{black}{A number of fruitful results for contention resolution have been obtained assuming availability of collision detection.}
Martel presented in~\cite{M:Selection} a randomized adaptive protocol for $k$-Selection that works in $O(k+\log n)$ time in expectation\footnote{Througout this paper, $\log$ means $\log_2$ unless otherwise stated.}. As argued by Kowalski in~\cite{K:selection}, this protocol can be improved to $O(k+\log\log n)$ in expectation using Willard's expected $O(\log\log n)$ selection protocol of~\cite{W:logselection}. In the same paper, Willard shows that, for any given protocol, there exists a choice of $k\leq n$ such that selection takes $\Omega(\log\log n)$ expected time for the class of fair selection protocols (i.e., protocols where all nodes use the same probability of transmission to transmit in any given time slot). For the case in which $n$ is not known, in the same paper a $O(\log\log k)$ expected time selection protocol is described, again, making use of collision detection. 
If collision detection is not available, using the techniques of Kushilevitz and Mansour in~\cite{KM:broadcast}, it can be shown that, for any given protocol, there exists a choice of $k\leq n$ such that $\Omega(\log n)$ is a lower bound in the expected time to get even the first message delivered.

Regarding deterministic solutions, the $k$-Selection problem was shown to be in $O(k\log(n/k))$ already in the 70's by giving adaptive protocols that make use of collision detection~\cite{C:treealgpackbroad,H:adaplocaldist,TM:freePackAcc}. In all these results the algorithmic technique, known as \emph{tree algorithms}, relies on modeling the protocol as a complete binary tree where the messages are placed at the leaves. Later, Greenberg and Winograd~\cite{GW:determSelection} showed a lower bound for that class of protocols of $\Omega(k\log_k n)$. Regarding oblivious algorithms, {Koml\`os and Greenberg~\cite{KG:Selection}} showed the existence of $O(k\log(n/k))$ solutions even without collision detection but requiring knowledge of $k$ and $n$. More recently, Clementi, Monti, and Silvestri~\cite{CMS:selectiveFamilies} showed a lower bound {of $\Omega(k\log(n/k))$}, which also holds for adaptive algorithms if collision detection is not available.
In~\cite{K:selection}, Kowalski presented the construction of an oblivious deterministic protocol that, using the explicit selectors of Indyk~\cite{I:selectors}, gives a $O(k \mathop{\mathrm{polylog}} n)$ upper bound without collision detection.

In~\cite{GGT:controltower}, Ger\`eb-Graus and Tsantilas presented an
algorithm that solves the problem of realizing arbitrary $h$-relations in an $n$-node network, with probability at least $1-1/n^c,c>0$, in $\Theta(h+\log n\log\log n)$ steps. In an $h$-relation, each processor is the source as well as the destination of at most $h$ messages. Making $h=k$ this protocol can be used to solve \sKS. However, it requires that nodes know $h$.

\textcolor{black}{
Extending previous work on tree algorithms, Greenberg and Leiserson~\cite{GL:fattrees} presented randomized routing strategies in fat-trees for bounded number of messages. Underlying their algorithm lies a sawtooth technique used to ``guess'' the appropriate value for some critical parameter (load factor), that can be used to ``guess'' the number of contenders in \sKS. Furthermore, modulo choosing the appropriate constants, \EBOBO uses the same sawtooth technique. Their algorithm uses $n$ and it is analyzed only asymptotically.
}

Monotonic back-off strategies for contention resolution of batched arrivals of $k$ packets on simple multiple access channels, a problem that can be seen as \sKS, have been analyzed in~\cite{bendercontention}. In that paper, it is shown that $r$-\emph{exponential back-off}, a monotonic technique used widely that has proven to be efficient for many practical applications is in $\Theta(k \log^{\log r} k)$ for batched arrivals. The best strategy shown 
is the so-called \emph{loglog-iterated back-off} with a makespan in $\Theta(k\log\log k/\log\log\log k)$ with probability at least $1-1/k^c,c>0$, which does not use any knowledge of $k$ or $n$.
\textcolor{black}{
In the same paper, the sawtooth technique used in~\cite{GL:fattrees} is informally described in a paragraph while pointing out that it yields linear time for contention resolution thanks to non-monotonicity, but no analysis is provided.
}

\textcolor{black}{
Later on, Farach-Colton and Mosteiro presented an optimal protocol for Gossiping in \RNs in~\cite{FCM:gossiping}. The sawtooth technique embedded in~\cite{GL:fattrees} is used in that paper as a subroutine to resolve contention in linear time as in \EBOBO. However, the algorithm makes use of $n$ to achieve the desired probability of success and the analysis is only asymptotical. 
}

A randomized adaptive protocol for \sKS in a one-hop \RN without collision detection was presented in~\cite{FM:kSelJournal}. 
The protocol is shown to solve the problem in $(e+1+\xi)k+O(\log^2(1/\varepsilon))$ steps with probability at least $1-2\varepsilon$, where $\xi>0$ is an arbitrarily small constant and $0<\varepsilon\leq 1/(n+1)$. 
Modulo a constant factor, the protocol is optimal if $\varepsilon\in\Omega(2^{-\sqrt{n}})$. 
However, the algorithm makes use of the value of $\varepsilon$, which must be upper bounded as above in order to guarantee the running time. Therefore, knowledge of $n$ is required.

\paragraph{Our Results:}
\textcolor{black}{
In this paper, we present a novel randomized protocol for \sKS in a one-hop \RN, and we recreate a previously used technique suiting the constants for our purpose and without making use of $n$.
Both protocols work without collision detection and do not require information about the number of contenders.
As mentioned, these protocols are called \OFA and \EBOBO.
It is proved that \OFA solves \sKS within $2(\delta+1)k+O(\log^2 k)$ steps, with probability at least $1-2/(1+k)$, for ${\mathrm e}<\delta\leq\sum_{j=1}^{5}(5/6)^j$.
On the other hand, \EBOBO is shown to solve \sKS within $4(1+1/\delta)k$ steps with probability at least $1-1/k^c$ for some constant $c>0$, $0< \delta < 1/{\mathrm e}$, and big enough $k$.
}
Given that $k$ is a lower bound for this problem, both protocols are optimal (modulo a small constant factor)
\textcolor{black}{for big enough number of contenders.}

Observe that the bounds and the probabilities obtained are given as functions of the parameter $k$, as done in~\cite{bendercontention}, since this is the input parameter of our version of the problem. 
A fair comparison with the results obtained as function of $k$ and $n$ would require that $k$ is large enough, so that $n=\Omega(k^c)$, for some constant $c$.
\textcolor{black}{Both protocols presented are of interest because,} although protocol \EBOBO is simpler, \OFA achieves a better multiplicative factor, although the constant in the sublinear additive factor may be big for small values of $k$.

Additionally, results of the evaluation by simulation of the average behavior of \OFA and \EBOBO and a comparison with \LFA and \LLIBO are presented.
Both algorithms \OFA and \EBOBO run faster than \LLIBO on average, even for small values of $k$. Although \LLIBO has higher asymptotic complexity, one may have expected
that it may run fast for small networks. On the other hand, the knowledge on a bound of $k$ assumed by \LFA seems to provide an edge with respect to
\OFA and \EBOBO for large values of $k$. However, \LFA has a much worse behavior than the proposed protocols for small  to moderate network sizes ($k \leq 10^5$).
In any case, for all values of $k$ simulated, \OFA and \EBOBO have a very stable and efficient behavior.


\section{\OFA}
\label{section:ofa}
As in \LFA~\cite{FM:kSelJournal}, \OFA is composed by two interleaved randomized algorithms, each intended to handle the communication for different levels of contention.
One of the algorithms, which we call \emph{AT}, is intended for delivering messages while the number of nodes contending for the channel is above some logarithmic threshold (to be defined later). The other algorithm, called \emph{BT}, has the purpose of handling message delivery after that number is below that threshold. Nonetheless, a node may transmit using the BT (resp. AT) algorithm even if the number of messages left to deliver is above (resp. below) that threshold. 

Both algorithms, AT and BT, are based on transmission trials with certain probability and what distinguishes them is just the specific probability value used. It is precisely the particular values of probability used in each algorithm what differentiates \OFA from \LFA.
For the BT algorithm, the probability of transmission is inversely logarithmic on the number of messages already transmitted, while in \LFA that probability was fixed.
For the AT algorithm the probability of transmission is the inverse of an estimation on the number of messages left to deliver. 
In \OFA this estimation is updated continuously, whereas in \LFA it was updated after some steps without communication. 
These changes yield a protocol still linear, but now it is not necessary to know $n$.
Further details can be seen in Algorithm~\ref{alg}.

For clarity, Algorithms AT and BT are analyzed separately taking into account in both analyses the presence of the other. We show the efficiency of the AT algorithm in producing successful transmissions while the number of messages left is above some logarithmic threshold, and the efficiency of the BT algorithm handling the communication after that threshold is crossed. For the latter, we use standard probability computations to show our time upper bound. 
For the AT algorithm, we use concentration bounds to show that the messages are delivered with large enough probability, while the density estimator $\widetilde{\kappa}$ does not exceed the actual number of messages left.
This second proof is more involved since it requires some preliminary lemmas.
We establish here the main theorem, which is direct consequence of the lemmata described 
that can be found in the Appendix.

\begin{theorem}
For any $e<\delta\leq\sum_{j=1}^{5}(5/6)^j$ and for any one-hop \RN under the model detailed in Section~\ref{section:intro}, \OFA solves \sKS within $2(\delta+1)k+O(\log^2 k)$ communication steps, with probability at least $1-2/(1+k)$.
\end{theorem}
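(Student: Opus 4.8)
The plan is to split the execution of \OFA into two regimes according to a logarithmic threshold $T=\Theta(\log k)$ on the number of messages still to be delivered, and to charge the time spent in each regime to the matching term of the bound. Since each communication step is devoted alternately to the AT and the BT subroutine, the real running time is essentially twice the number of ``productive'' steps of the dominant (AT) subroutine, and this interleaving is what produces the factor $2$ in front of $(\delta+1)k$. I would first analyze the above-threshold regime (handled by AT), which accounts for the linear term $2(\delta+1)k$, then the below-threshold regime (handled by BT), which accounts for the additive $O(\log^2 k)$ term, and finally combine the two estimates through a union bound over their respective failure events.

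For the AT regime, let $k'$ denote the number of messages still to deliver at a given point. The central object is the density estimator $\widetilde{\kappa}$, which fixes the per-node transmission probability at $1/\widetilde{\kappa}$. The key invariant I would establish (this is exactly the purpose of the preliminary lemmata deferred to the Appendix) is that, with probability at least $1-1/(1+k)$, the estimator never overshoots the true count, so that $\widetilde{\kappa}=\Theta(k')$ holds while the sawtooth sweeps. Under this invariant the probability that exactly one of the $k'$ active nodes transmits in an AT step is $k'(1/\widetilde{\kappa})(1-1/\widetilde{\kappa})^{k'-1}=\Theta(1)$, bounded below by a constant near $1/\mathrm e$; this is the origin of the constant $\delta>\mathrm e$, while the admissible range $\delta\le\sum_{j=1}^{5}(5/6)^j$ reflects the finite geometric schedule (of ratio $5/6$) by which $\widetilde{\kappa}$ is advanced during the sweep. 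Summing the expected number of steps needed to obtain a success at each value of $k'$ from $k$ down to $T$, and then applying a Chernoff-type concentration bound to the total number of successes over the whole regime, I would conclude that at most $(\delta+1)k$ AT steps, hence $2(\delta+1)k$ real steps, suffice with high probability to bring $k'$ below $T$. Crucially, the concentration is strong enough only while $k'>T=\Theta(\log k)$, which is precisely why the threshold must be logarithmic and why the last few messages are handed off to BT.

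For the below-threshold regime, at most $T=O(\log k)$ messages remain. Here the BT subroutine uses a transmission probability of order $1/\log k$, so when only $O(\log k)$ nodes contend the probability of a non-colliding transmission in a single step is $\Omega(1/\log k)$. By a standard geometric-waiting-time computation each remaining message is delivered within $O(\log k)$ steps with probability $1-1/\mathrm{poly}(k)$, and a union bound over the $O(\log k)$ surviving messages gives a total of $O(\log^2 k)$ steps with probability at least $1-1/(1+k)$. Adding the two failure probabilities yields the claimed overall success probability $1-2/(1+k)$, and adding the two time contributions yields $2(\delta+1)k+O(\log^2 k)$.

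The main obstacle is the AT analysis, and specifically the simultaneous control of the invariant $\widetilde{\kappa}\le k'$ together with the lower bound on the per-step success rate. The estimator is driven by the noisy channel feedback (collision versus silence), so its trajectory is itself random and correlated with the deliveries it is meant to track; pinning the leading constant down to a $\delta$ only slightly above $\mathrm e$, rather than an unspecified $O(1)$, requires carefully tuned concentration arguments over successive windows. This is why the AT part is factored into several lemmas, in contrast with the BT part, which succumbs to the elementary probability computations sketched above.
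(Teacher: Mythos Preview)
Your high-level decomposition (AT regime for $\kappa>M=\Theta(\log k)$, BT regime for $\kappa\le M$, factor~$2$ from interleaving, union bound on the two failure events) matches the paper, and your sketch of the BT analysis is essentially what the paper does. Two points, however, diverge from the actual argument and one of them matters for getting the constant.

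First, two factual slips. The channel has \emph{no} collision detection, so $\widetilde\kappa$ is not ``driven by collision versus silence''; it is incremented deterministically by~$1$ at every AT-step and decremented by $\delta$ (resp.\ $\delta+1$) only when some \emph{other} node's message is successfully received in a BT- (resp.\ AT-) step. Also, the bound $\delta\le\sum_{j=1}^{5}(5/6)^j$ does not describe ``a geometric schedule by which $\widetilde\kappa$ is advanced.'' In the paper's analysis a round (an interval of $\tau=\Theta(\log k)$ AT-steps) is subdivided into five sub-rounds of sizes $\tau,(5/6)\tau,\ldots,(5/6)^4\tau$; the ratio $5/6$ arises from applying Chernoff with deviation $\varphi=1/6$ so that each sub-round yields at least $(5/6)\tau_i/\delta$ successes, and the upper bound on $\delta$ is exactly what makes the total over five sub-rounds reach $\tau$.

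Second, and more substantively, the role of the invariant $\widetilde\kappa\le\kappa$ is inverted relative to your plan. You propose to use the invariant to lower-bound the per-step success probability, then \emph{sum expected waiting times and concentrate} to get $(\delta+1)k$. The paper does it the other way around: the sub-round Chernoff analysis (Lemmas on positive/negative correlation plus the success-probability lemma) is used solely to show that enough successes occur in each round to \emph{maintain} $\widetilde\kappa\le\kappa$ until $\kappa\le M$; once that invariant holds, the bound of $(\delta+1)k$ AT-steps is a \emph{deterministic} bookkeeping consequence of the update rule (each AT-step adds~$1$ to $\widetilde\kappa$, each success subtracts at most $\delta+1$, and $\widetilde\kappa$ never exceeds $\kappa$). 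Your summation approach could in principle be made to work, but it would have to cope with the dependence between successive successes (the paper's two correlation lemmas are there precisely for this) and would not deliver the constant $(\delta+1)$ as cleanly.
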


\begin{algorithm}[tbp]
\label{alg}
\caption{\OFA.
Pseudocode for node $x$. $\delta$ is a constant such that $e<\delta\leq\sum_{j=1}^{5}(5/6)^{j}$.
}
\small
\SetKwData{Step}{communication-step}
\SetKwData{Mess}{message}
\SetKwFor{Upon}{upon}{do}{endupon}
\SetKwFor{On}{on}{do}{endon}
\SetKwFor{Task}{Task}{}{endtask}
\SetKwFor{WP}{}{do}{enddo}
\dontprintsemicolon
\Upon{$\Mess$ arrival}{
   $\widetilde{\kappa}\leftarrow \delta+1$\label{init}\tcp*[f]{Density estimator}\;
   $\sigma\leftarrow 0$\tcp*[f]{Messages-received counter}\;
   \textbf{start} tasks $1, 2$ and $3$\;
}
\Task{1}{
\ForEach{$\Step =1,2,\dots$}{
   \If(\tcp*[f]{BT-step}){$\Step\equiv 0\pmod{2}$}{
      transmit $\langle x, \Mess\rangle$ with prob $1/(1+\log (\sigma+1))$\;
   }
   \Else(\tcp*[f]{AT-step}){
         transmit $\langle x, \Mess\rangle$ with probability $1/\widetilde{\kappa}$\; 
         $\widetilde{\kappa}\leftarrow \widetilde{\kappa}+1$\;
   }
}
}
\Task{2}{
   \Upon{reception from other node}{
      $\sigma\leftarrow \sigma+1$\;
      \If(\tcp*[f]{BT-step}){$\Step\equiv 0\pmod{2}$}{
         $\widetilde{\kappa}\leftarrow \max\{\widetilde{\kappa}-\delta,\delta+1\}$\;
      }
      \Else(\tcp*[f]{AT-step}){
         $\widetilde{\kappa}\leftarrow \max\{\widetilde{\kappa}-\delta-1,\delta+1\}$\;
      }
   }
}
\Task{3}{
   \textbf{upon} $\Mess$ $delivery$ \textbf{stop}\;
}
\end{algorithm}

\section{\EBOBO}
\label{section:ebobo}
The algorithm presented in this section is based in contention windows. 
That is, each node repeatedly chooses uniformly one time slot within an interval, or \emph{window}, of
time slots to transmit its message. 
Regarding the size of such window, our protocol follows a back-on/back-off strategy. 
Namely, the window is increased in an outer loop and decreased in an inner loop, as detailed in Algorithm~\ref{algo}. 

\begin{algorithm}[t]
\dontprintsemicolon
\label{algo}
\caption{Window size adjustment in \EBOBO. $0<\delta<1/\mathrm{e}$ is a constant.}
  \For{$i=\{1,2,\dots\}$}{
      $w\leftarrow 2^i$\;
      \While{$w\geq 1$}{ 
          Choose uniformly a step within the next $w$ steps\;
          $w\leftarrow w\cdot(1-\delta)$\;
      } 
  }
\end{algorithm}

The intuition for the algorithm is as follows. 
Let $m$ be the number of messages left at a given time right before using a window of size $w$.
We can think of the algorithm as a random process where
$m$ balls (modelling the messages) are dropped uniformly in
$w$ bins (modelling time slots).  We will show that, if $m\leq w$, for large
enough $m$, with high probability, at least a constant fraction of the
balls fall alone in a bin. Now, we can repeat the process removing
this constant fraction of balls and bins until all balls have fallen
alone. Since nodes do not know $m$, the
outer loop increasing the size of the window is necessary.
The analysis follows.

 
\begin{lemma}
\label{lemma:ballsbins}
For $k\geq m\geq (2{\mathrm e}/(1-{\mathrm e}\delta)^2)(1 + (\beta+1/2)\ln k)$, $0<\delta<1/{\mathrm e}$, $m\leq w$, and $\beta>0$, if
$m$ balls are dropped in $w$ bins uniformly at random, the
probability that the number of bins with exactly one ball is less
than $\delta m$ is at most $1/k^{\beta}$.
\end{lemma}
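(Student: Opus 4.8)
The plan is to let $X$ denote the number of bins that receive exactly one ball and to show that $X$ is sharply concentrated around its expectation, which sits a constant factor above the target value $\delta m$. First I would write $X=\sum_{i=1}^{w}X_i$, where $X_i$ is the indicator of the event $\{B_i=1\}$ ($B_i$ being the load of bin $i$), and compute $\mathbb{E}[X_i]=\frac{m}{w}\left(1-\frac1w\right)^{m-1}$, so that $\mathbb{E}[X]=m\left(1-\frac1w\right)^{m-1}$. Using the hypothesis $m\le w$ together with the elementary inequality $\left(1-\frac1w\right)^{w-1}\ge 1/{\mathrm e}$, this yields the clean lower bound $\mathbb{E}[X]\ge m/{\mathrm e}$. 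Since $\delta<1/{\mathrm e}$ by hypothesis, the target $\delta m$ lies strictly below the mean: writing $\mu=\mathbb{E}[X]$ and $\delta m=(1-\gamma)\mu$, the factor $\gamma=1-\delta m/\mu\ge 1-{\mathrm e}\delta>0$ is bounded away from $0$, which is what makes an exponentially small lower-tail bound possible.

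The core step is a lower-tail concentration inequality of Chernoff type, which would give $\Pr[X<\delta m]=\Pr[X<(1-\gamma)\mu]\le\exp(-\gamma^2\mu/2)$. A short optimization over the admissible range $\mu\ge m/{\mathrm e}$ (the worst case being $\mu=m/{\mathrm e}$) shows $\gamma^2\mu\ge m(1-{\mathrm e}\delta)^2/{\mathrm e}$, hence $\Pr[X<\delta m]\le\exp\!\left(-\frac{m(1-{\mathrm e}\delta)^2}{2{\mathrm e}}\right)$. Substituting the hypothesis $m\ge\frac{2{\mathrm e}}{(1-{\mathrm e}\delta)^2}\bigl(1+(\beta+1/2)\ln k\bigr)$ makes the exponent at least $\beta\ln k$, with room to spare (the extra $1+\frac12\ln k$ absorbing lower-order slack), so the probability is at most $k^{-\beta}$, as claimed.

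The main obstacle is that the summands $X_i$ are not independent, and, crucially, each $X_i$ is a \emph{non-monotone} function of the bin load $B_i$, so one cannot simply invoke negative association of the occupancy vector $(B_1,\dots,B_w)$ to license the Chernoff bound above. I see two routes, with a tradeoff in the constant. The tight route, matching the factor $2{\mathrm e}$ in the statement, is to establish directly a Chernoff-type lower-tail bound for the singleton indicators, for instance via a Poissonization argument in which the bin loads become independent Poisson variables and the $X_i$ become genuinely independent, transferring the bound back to the exact $m$-ball model. The clean but slightly lossier route is McDiarmid's bounded-differences inequality applied to the $m$ independent ball positions: relocating a single ball changes $X$ by at most $2$, giving $\Pr[X\le\mu-t]\le\exp\!\left(-t^2/(2m)\right)$ and an analogous conclusion with ${\mathrm e}$ replaced by ${\mathrm e}^2$ in the denominator of the exponent. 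I expect handling the dependency structure carefully enough to retain the factor $2{\mathrm e}$ rather than $2{\mathrm e}^2$ to be the delicate part; everything else is the expectation computation and a routine substitution.
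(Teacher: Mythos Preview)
Your proposal is correct and matches the paper's approach: the paper computes $\mathbb{E}[X]\ge m/{\mathrm e}$, handles the dependence via Poissonization (so the $X_i$ become independent indicators with mean $1/{\mathrm e}$), applies the Chernoff lower tail to get $\exp\!\bigl(-m(1-{\mathrm e}\delta)^2/(2{\mathrm e})\bigr)$, and then transfers back to the exact model at the cost of a multiplicative ${\mathrm e}\sqrt{m}$ factor---which is precisely the ``lower-order slack'' you anticipated the term $1+\tfrac12\ln k$ would absorb. Your instinct that Poissonization is the route giving the constant $2{\mathrm e}$ (rather than the $2{\mathrm e}^2$ one would get from bounded differences) is exactly right.
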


\begin{proof}
Since a bigger number of bins can only reduce the number of bins with more than one ball, if the claim holds for $w=m$ it also holds for $w>m$. Thus, it is enough to prove the first case.
The probability for a given ball to fall alone in a given bin is $(1/m)(1-1/m)^{m-1} \geq 1/({\mathrm e}m)$.
Let $X_i$ be a random variable that indicates if there is exactly one ball in bin $i$. 
Then, $Pr(X_i=1)\geq 1/{\mathrm e}$.
To handle the dependencies that arise in balls and bins problems, we
approximate the joint distribution of the number of balls in all bins
by assuming the load in each bin is an independent Poisson random
variable with mean 1. Let $X$ be a random variable that indicates the
total number of bins with exactly one ball. Then,
$\mu=E[X]=m/{\mathrm e}$. Using Chernoff-Hoeffding bounds~\cite{book:mitzenmacher},
$Pr(X\leq\delta m)
\leq \exp\left(-m \left(1-{\mathrm e}\delta\right)^2/(2{\mathrm e})\right)$, because $0<\delta<1/{\mathrm e}$.

As shown in~\cite{book:mitzenmacher}, any event that takes place with
probability $p$ in the Poisson case takes place with probability at
most $p{\mathrm e}\sqrt{m}$ in the exact case. Then, we want to show that
 $\exp\left(-m(1-{\mathrm e}\delta)^2/ (2{\mathrm e})\right) {\mathrm e}\sqrt{m} \leq k^{-\beta}$,
which is true for
$m \geq \frac{2{\mathrm e}}{(1-{\mathrm e}\delta)^2}\left(1+\left(\frac{1}{2}+\beta\right)\ln k\right)$.
\qed
\end{proof}


\begin{theorem}
For any constant $0 < \delta < 1/{\mathrm e}$, \EBOBO solves \sKS within $4(1+1/\delta)k$ steps with probability at least $1-1/k^c$, for some constant $c>0$ and big enough $k$.
\end{theorem}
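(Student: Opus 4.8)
The plan is to track the number $m$ of undelivered messages across the phases of the outer loop, to identify one critical phase in which the bulk of the work happens, and then to add one extra phase to clear a small remainder. Let $i^\star=\lceil\log_2 k\rceil$ be the first phase whose initial window satisfies $2^{i^\star}\geq k$, so that $2^{i^\star+1}<4k$. I would first dispose of the running-time bookkeeping, which is independent of the probabilistic content: within any phase $i$ the inner loop uses windows of sizes $2^i,2^i(1-\delta),2^i(1-\delta)^2,\dots$, whose sum is at most $2^i\sum_{j\geq 0}(1-\delta)^j=2^i/\delta$. Hence phases $1,\dots,i^\star$ together consume at most $\sum_{i=1}^{i^\star}2^i/\delta=(2^{i^\star+1}-2)/\delta<2^{i^\star+1}/\delta$ steps, and one further window of size $2^{i^\star+1}$ accounts for the finishing work; adding these gives $2^{i^\star+1}(1+1/\delta)<4(1+1/\delta)k$, matching the claimed bound. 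It therefore remains to show that, with probability at least $1-1/k^c$, every message is delivered using only phases $1,\dots,i^\star$ together with the first window of phase $i^\star+1$.

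The heart of the argument is the bulk reduction inside phase $i^\star$. Since $m\leq k\leq 2^{i^\star}=w$ at the start of that phase, each inner iteration is exactly an instance of the balls-and-bins process of Lemma~\ref{lemma:ballsbins}. I would run an induction over the inner iterations, maintaining the invariant $m\leq w$: whenever the current count still satisfies $m\geq T:=\tfrac{2{\mathrm e}}{(1-{\mathrm e}\delta)^2}\bigl(1+(\beta+\tfrac12)\ln k\bigr)$, Lemma~\ref{lemma:ballsbins} (with a suitable constant $\beta$) guarantees that at least $\delta m$ messages fall alone and are delivered, with probability at least $1-1/k^\beta$, so $m$ is multiplied by at most $(1-\delta)$; since $w$ is simultaneously multiplied by exactly $(1-\delta)$, the invariant $m\leq w$ is preserved. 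Thus, conditioned on these events, $m$ decays geometrically from at most $k$ to below $T=\Theta(\log k)$. A short computation shows that the number of iterations required, $\log_{1/(1-\delta)}(k/T)$, is at most the $\Theta(\log k)$ inner iterations available in phase $i^\star$ once $k$ is large enough (this is the one place where ``big enough $k$'' is genuinely needed for correctness). A union bound over these $O(\log k)$ iterations, taking $\beta$ a large enough constant, keeps the total failure probability of the bulk below $1/k^{c}$.

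It then remains to clear the at most $T=\Theta(\log k)$ messages that may survive phase $i^\star$. Here I would use the first inner iteration of phase $i^\star+1$, whose window has size $2^{i^\star+1}\geq 2k$. Dropping $m\leq T$ balls into $w\geq 2k$ bins, the probability that some pair collides is at most $\binom{m}{2}/w\leq T^2/(4k)=O(\log^2 k/k)$, which is below $1/k^{c'}$ for any constant $c'<1$ and all large $k$; on the complementary event all surviving messages land in distinct slots and are delivered. Combining this tail event with the bulk events by a final union bound yields overall success probability at least $1-1/k^{c}$ for a suitable constant $c>0$, while the number of steps used is bounded by $4(1+1/\delta)k$ as computed above.

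I expect the main obstacle to be the tail, namely the $O(\log k)$ messages left once the Chernoff concentration of Lemma~\ref{lemma:ballsbins} no longer applies: the delicate points are verifying that a single window of size $\Theta(k)$ delivers them with probability $1-1/k^{c}$ (rather than merely with constant probability, which a handful of small windows would give), and that charging this to the first window of phase $i^\star+1$ produces exactly the additive $4k$ term and no more. The remaining work is pure bookkeeping: maintaining $m\leq w$ so that Lemma~\ref{lemma:ballsbins} applies at every bulk iteration, checking that phase $i^\star$ contains enough inner iterations to reach the threshold $T$, and absorbing the lower-order rounding of non-integer window sizes into the ``big enough $k$'' hypothesis.
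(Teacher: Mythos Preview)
Your proposal is correct and follows essentially the same route as the paper: bulk reduction within the first phase whose initial window reaches $k$ via repeated application of Lemma~\ref{lemma:ballsbins} while maintaining $m\le w$, clearing the $O(\log k)$ survivors in the first window of the next phase via a birthday bound, and telescoping the window sizes to get the $4(1+1/\delta)k$ running time. The only cosmetic difference is that the paper combines the two success events multiplicatively (and then massages the resulting inequality) whereas you use a union bound, which is if anything cleaner; also, your remark that ``big enough $k$'' is needed to ensure phase $i^\star$ contains enough inner iterations is over-cautious, since $\log_{1/(1-\delta)}(k/T)<\log_{1/(1-\delta)}k\le\log_{1/(1-\delta)}2^{i^\star}$ holds automatically once $T>1$.
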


\begin{proof}
Consider an execution of the algorithm on $k$ nodes. 
Let a round be the sequence of time steps corresponding to one iteration of the inner loop of Algorithm~\ref{algo}, i.e. the time steps of a window. 
Let a phase be the sequence of rounds corresponding to one iteration of the outer loop of Algorithm~\ref{algo}, i.e. when the window is monotonically reduced.

Consider the first round when $k\leq w< 2k$. 
Assume no message was transmitted successfully before. 
(Any messages transmitted could only reduce the running time.) 
By Lemma~\ref{lemma:ballsbins}, we know that, for $0<\delta<1/\mathrm{e}$ and $\beta>0$, at least $\delta k$ messages are transmitted in this round with probability at least $1-1/k^\beta$, as long as $k\geq \tau$, where $\tau \triangleq (2{\mathrm e}/(1-{\mathrm e}\delta)^2)(1 + (\beta+1/2)\ln k)$. 


Conditioned on this event, for some $\delta_1\geq \delta$ fraction of messages transmitted in the first round, using the same lemma we know that in the following round at least $\delta(1-\delta_1)k$ messages are transmitted with probability at least $1-1/k^\beta$, as long as $(1-\delta_1)k\geq \tau$. This argument can be repeated for each subsequent round until the number of messages left to be transmitted is less than $\tau$. Furthermore, given that the size of the window is monotonically reduced within a phase until $w=1$, even if the fraction of messages transmitted in each round is just $\delta$, the overall probability of reducing the number of messages left from $k$ to $\tau$ within this phase is at least $(1-1/k^\beta)^{\log_{1/(1-\delta)} (2k)}$.

Consider now the first round of the following phase, i.e. when $2k\leq w< 4k$. Assume that at most $\tau$ nodes still hold a message to be transmitted. Using the union bound, the probability that two or more of $m$ nodes choose a given step in a window of size $w$ is at most $\binom{m}{2}/w^2$. Applying again the union bound, the probability that in any step two or more nodes choose to transmit is at most $\binom{m}{2}/w\leq\binom{\tau}{2}/(2k)=\tau(\tau+1)/(4k)$.

Therefore, using conditional probability, in order to complete the proof, it is enough to show that
\begin{align}
\left(1-\frac{\tau(\tau+1)}{4k}\right) \left(1-\frac{1}{k^\beta}\right)^{\log_{1/(1-\delta)} (2k)} &\geq 1-\frac{1}{k^c},\textrm{ for some constant $c>0$}\nonumber\\
\exp\left(-\frac{\tau(\tau+1)}{4k-\tau(\tau+1)} - \frac{\log_{1/(1-\delta)} (2k)}{k^\beta-1}\right) &\geq \exp\left(-\frac{1}{k^c}\right)\nonumber\\
\frac{\tau(\tau+1)}{4k-\tau(\tau+1)} + \frac{\log_{1/(1-\delta)} (2k)}{k^\beta-1} &\leq \frac{1}{k^c}.\label{bound}
\end{align}

Given that $\delta$ is a constant and fixing $\beta>0$ as a constant, Inequality~\ref{bound} is true for some constant $c<\min\{1,\beta\}$, for big enough $k$.
Telescoping the number of steps up to the first round when $w=4k$, the running time is less than $4k+2k\sum_{i=0}^\infty\sum_{j=0}^\infty (1-\delta)^j/2^i = 4(1+1/\delta)k$.
\qed
\end{proof}

\section{Evaluation}
\label{section:eval}

In order to evaluate the expected behavior of the algorithms  \OFA and \EBOBO, and compare it with the previously proposed
algorithms \LLIBO and \LFA, we have simulated the four algorithms. The simulations measure the number of steps that the algorithms take until the static $k$-selection problem has been solved, i.e., each of the $k$ activated nodes of the Radio Network has delivered its message, for different values of $k$. 
Several of the algorithms have parameters that can be adapted. The value of these parameters is the same for all the simulations of the same algorithm (except the parameter $\varepsilon$ of \LFA that has to depend on $k$). For \EBOBO the parameter is chosen to be $\delta=0.366$. For \OFA the parameter is chosen to be $\delta = 2.72$. For \LFA, the parameters (see their meaning in \cite{FM:kSelJournal}) are chosen to be $\xi_{\delta}=\xi_{\beta}=0.1$ and $\varepsilon \approx 1/(k+1)$, while two values of $\xi_t$ have been used, $\xi_t=1/2$ and $\xi_t=1/10$. Finally, \LLIBO is simulated with parameter $r=2$ (see \cite{bendercontention}).

Figure \ref{f-steps} presents the average number of steps taken by the simulation of the algorithms. The plot shows the the average of 10 runs for each algorithm as a function of $k$. In this figure it can be observed that \LFA takes significantly larger number of steps than the other algorithms for moderately small values of $k$ (up to $10^5$). Beyond $k=10^5$ all algorithms seem to have a similar behavior.

\begin{figure}[t]
\centering\includegraphics[width=0.6\textwidth]{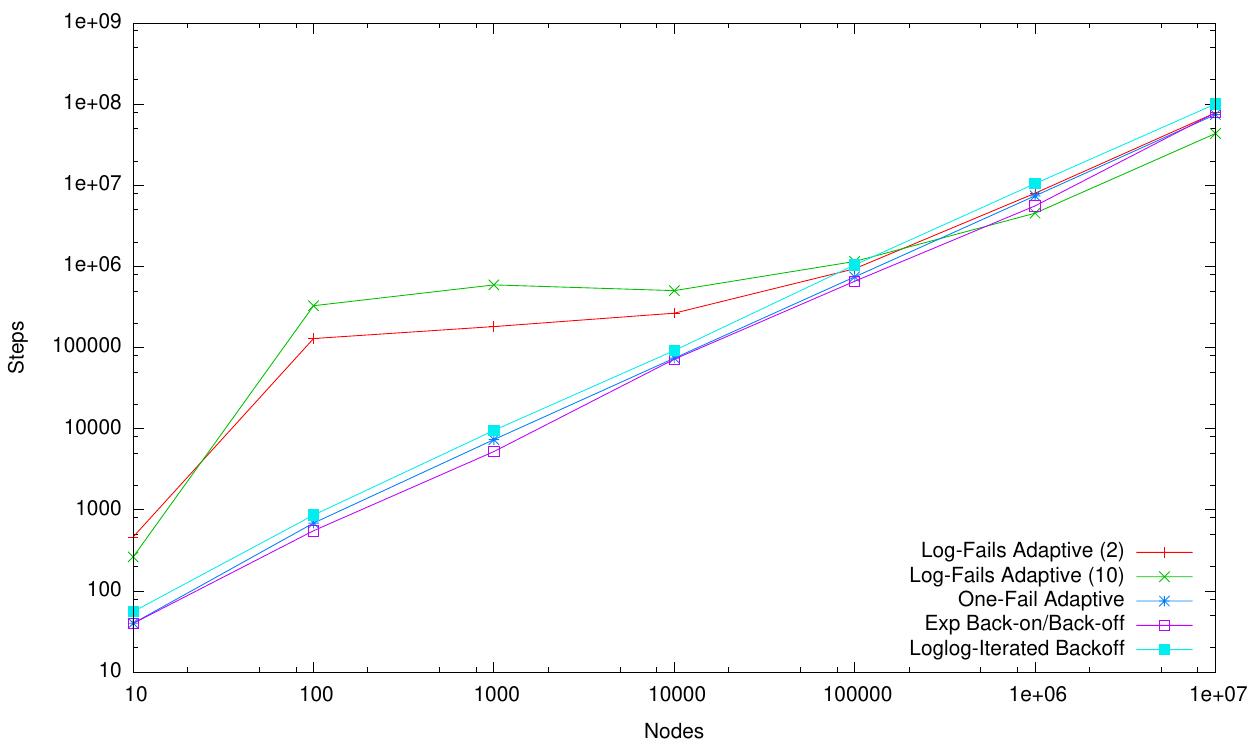}
\caption{Number of steps to solve static $k$-selection, per number of nodes $k$.}
\label{f-steps}
\end{figure}

\begin{table}[tdp]
\begin{center}
\begin{tabular}{|l|r|r|r|r|r|r|r|r|} \hline
$k$ & $10$ & $10^2$ & $10^3$ & $10^4$ & $10^5$ & $10^6$ & $10^7$ & Analysis\\ \hline \hline
\LFA $\xi_t=1/2$ & 46.4  & 1292.4 & 181.9 & 26.6 & 9.4 & 8.0 & 7.8 & 7.8 \\ \hline
\LFA $\xi_t=1/10$ & 26.3 & 3289.2 & 593.8 & 50.3 & 11.5 & 4.5 & 4.4 & 4.4\\ \hline
\OFA & 4.0 & 6.9 & 7.4 & 7.4 & 7.4 & 7.4 & 7.4 & 7.4 \\ \hline
\EBOBO & 4.0 & 5.5 & 5.2 & 7.2 & 6.6 & 5.6 & 7.9 & 14.9\\ \hline
\LLIBO & 5.6 & 8.6 & 9.6 & 9.2 & 10.5 & 10.5 & 10.1 & $\Theta \left( \frac{\log \log k}{\log \log \log k} \right)$ \\ \hline
\end{tabular}
\end{center}
\caption{Ratio steps/nodes as a function of the number of nodes $k$.}
\label{t-ratio-nodes}
\end{table}%

A higher level of detail can be obtained by observing Table \ref{t-ratio-nodes}, which presents the ratio obtained by dividing the number of steps
(plotted in Figure \ref{f-steps}) by the value of $k$, for each $k$ and each algorithm. In this table, the bad behavior of \LFA for moderate values of $k$
can be observed, with values of the ratio well above those for large $k$. It seems like the value of $\xi_t$ used has an impact in this ratio, so that
the smaller value $\xi_t=1/10$ causes larger ratio values. Surprisingly, for large values of $k$ ($k \geq 10^6$), the ratios observed are almost exactly the
constant factors of $k$ obtained from the analysis \cite{FM:kSelJournal}. (Recall that all the analyses we refer to are with high probability while the simulation results are averages.) This may indicate that the analysis with high probability is very tight and that the term $O(\log^2(1/\varepsilon))$ that appears in the complexity expression is mainly relevant for moderate values of $k$. The ratio obtained for large $k$ by \LFA with $\xi_t=1/10$ is the smallest we have obtained in the
set of simulations.
\LLIBO, on its hand, seems to have a constant ratio of around $10$. In reality
this ratio is not constant but, since it is sublogarithmic, this fact can not be observed for the (relatively small) values of $k$ simulated.

Regarding the ratios obtained for the algorithms proposed in this paper, they seem to show that the
constants obtained in the analyses (with high probability) are very accurate.
Starting at moderately large values of $k$ ($10^3$ and up) the ratio for \OFA becomes very stable and
equal to the value of $7.4$ obtained in the analysis.
The ratios for the \EBOBO simulations, on their hand, move between 4 and 8, while the analysis for the value
of $\delta$ used yields a constant factor of 14.9. Hence, the ratios are off by only a small constant factor. 
To appreciate these values it is worth to note that the smallest ratio
expected by any algorithm in which nodes use the same probability at any step is $e$, so these values are only a small
factor away from this optimum ratio.
In summary, the algorithms proposed here have small and stable ratios for all values of $k$ considered.
\section{Conclusions and Open Problems}
\label{section:conclude}

In this work, we have shown optimal randomized protocols (up to constants) for \sKS in \RNs that do not require any knowledge on the number of contenders. 
Future work includes the study of the dynamic version of the problem when messages arrive at different times under the same model, either assuming statistical or adversarial arrivals. The stability of monotonic strategies (exponential back-off) has been studied in~\cite{bendercontention}. In light of the improvements obtained for batched arrivals, the application of non-monotonic strategies to the dynamic problem is promising.


\bibliographystyle{abbrv}
\bibliography{./cocoon114}

\newpage
\appendix
\section*{Appendix}
\section{Lemmata of the analysis of \OFA }


For clarity, Algorithms AT and BT are analyzed separately taking into account in both the presence of the other. Communication steps are referred to by the name of the algorithm used, i.e. a communication step is either an AT-step or a BT-step.
The following notation will be used throughout the analysis. 

Let $\kappa$ be the number of messages not delivered yet (i.e., the number of active nodes), called the \emph{density}, and let $\widetilde{\kappa}$ be called the \emph{density estimator}.
Consider the execution of Algorithm~\ref{alg} divided in \emph{rounds} as follows.
The first round begins with the first step of the execution, and a new round starts on each step that $\widetilde{\kappa}$ reaches or exceeds a multiple of $\tau\triangleq 300\delta \ln (1+k)$ for the first time. (Hence, a new round may start only in an AT-step.)
More precisely, let the rounds be numbered as $r\in\{1,2,\dots\}$ and the AT-steps within a round as $t\in\{1,2,\dots\}$.
Let $T_r$ be the set of AT-steps of round $r$.
Let $\widetilde{\kappa}_{r,t}$ be the density estimator used at the AT-step $t$ of round $r$. 
Then, 
$$\forall i,j,t \in\mathbb{N} : \widetilde{\kappa}_{j,1}\geq (j-1)\tau \land ( (i<j \land t \in T_i) \Rightarrow \widetilde{\kappa}_{i,t}< (j-1)\tau).$$
Thus, round $1$ is the sequence of AT-steps from initialization when $\widetilde{\kappa}=1$ until the last step before $\widetilde{\kappa}\geq\tau$ for the first time, round $2$ begins on the AT-step when $\widetilde{\kappa}\geq\tau$ for the first time and ends right before $\widetilde{\kappa}\geq 2\tau$ for the first time, and so on.
Let $X_{r,t}$ be an indicator random variable such that, $X_{r,t} = 1$ if a message is delivered at the AT-step $t$ of round $r$, and $X_{r,t}=0$ otherwise.
Let $\kappa_{r,t}$ be the density at the beginning of the AT-step $t$ of round $r$.
Then, $Pr(X_{r,t}=1)= (\kappa_{r,t}/\widetilde{\kappa}_{r,t}) (1-1/\widetilde{\kappa}_{r,t})^{\kappa_{r,t}-1}$ is the probability of a successful transmission in the AT-step $t$ of round $r$.
Also, for a round $r$, let the number of messages delivered in the interval of AT-steps $[1,t)$ of $r$ be denoted as $\sigma_{r,t}$.

The following intermediate results will be useful.
First, we state the following useful facts.
\begin{fact}{\cite[\S 2.68]{book:mitrinovic}}
\label{fact}
$e^{x/(1+x)} \leq 1+x \leq e^{x}, 0<|x|<1$.
\end{fact}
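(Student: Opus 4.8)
The plan is to reduce both halves of the claimed chain to the single elementary bound $1+t\le e^{t}$, which holds for every real $t$. The right-hand inequality $1+x\le e^{x}$ is then just this bound with $t=x$, while the left-hand inequality $e^{x/(1+x)}\le 1+x$ will follow by feeding a carefully chosen argument into the same bound and rearranging, so that the entire statement rests on one lemma.

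First I would establish $1+t\le e^{t}$ for all $t\in\mathbb{R}$. Consider $g(t)=e^{t}-1-t$. Then $g(0)=0$ and $g'(t)=e^{t}-1$, which is negative for $t<0$ and positive for $t>0$; hence $t=0$ is the global minimum of $g$, so $g(t)\ge 0$ everywhere, with equality only at $t=0$. (Equivalently, one may invoke convexity of $e^{t}$ and compare with its tangent line at the origin.) Setting $t=x$ gives the upper bound at once; since $0<|x|<1$ forces $x\neq 0$, the inequality is in fact strict, which is more than the stated $\le$ demands.

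For the lower bound, note that $0<|x|<1$ gives $1+x\in(0,2)$, so in particular $1+x>0$ and the substitution $t=-x/(1+x)$ is legitimate. A direct computation yields $1+t=1-\dfrac{x}{1+x}=\dfrac{1}{1+x}$, so the bound $1+t\le e^{t}$ reads $\dfrac{1}{1+x}\le e^{-x/(1+x)}$. Taking reciprocals of two positive quantities reverses the inequality and produces exactly $e^{x/(1+x)}\le 1+x$, as required.

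The result is elementary and I do not expect a genuine obstacle; the only points needing care are bookkeeping ones: verifying that $1+x>0$ on the stated domain (so the reciprocal manipulation is valid and $1+t$ simplifies as claimed), and correctly reversing the direction of the inequality when passing to reciprocals. Both are immediate once the substitution $t=-x/(1+x)$ is identified, which is the single mildly non-obvious ingredient in the argument.
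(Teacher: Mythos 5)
Your proof is correct: the reduction of both sides to the single bound $1+t\le e^{t}$ (via the substitution $t=-x/(1+x)$ for the left-hand inequality, with the check that $1+x>0$ on the domain so the reciprocal step is valid) is the standard argument for this pair of inequalities. The paper itself gives no proof at all --- it simply cites the fact from Mitrinovi\'c's inequalities handbook --- so there is nothing to compare against; your derivation fills that gap correctly and with the expected level of rigor.
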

\begin{fact}
\label{claim}
Given any constant $a>1$, the function $f:\mathbb{R}^+\to\mathbb{R}^+$, such that $f(x)\triangleq(a/x)(1-1/x)^{a-1}$, is non decreasing for $x<a$ and maximized for $x=a$.
\end{fact}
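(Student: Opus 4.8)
The plan is to treat this as a single-variable calculus problem on the domain where $f$ is actually well defined, namely $x>1$, so that $1-1/x>0$ and the power $(1-1/x)^{a-1}$ is a genuine positive real even for non-integer $a$. (This is also the only regime that matters in the analysis, since $x=\widetilde{\kappa}\geq\delta+1>1$ always.) Because $f(x)>0$ throughout this domain, the cleanest route is logarithmic differentiation, which turns the product into a sum that differentiates term by term.

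First I would write $\ln f(x)=\ln a-\ln x+(a-1)\ln(1-1/x)$ and differentiate. The only nontrivial term is the last one, whose derivative is $(a-1)/\bigl(x(x-1)\bigr)$; combining it with $-1/x$ and factoring out $1/x$, I expect to reach
\[
\frac{f'(x)}{f(x)}=\frac{1}{x}\cdot\frac{a-x}{x-1}.
\]
Since $f(x)>0$, $1/x>0$, and $x-1>0$ on the domain, the sign of $f'(x)$ is exactly the sign of $(a-x)$. This at once gives that $f$ is strictly increasing (hence in particular non-decreasing) on $(1,a)$, has a stationary point at $x=a$, and is strictly decreasing on $(a,\infty)$; therefore $x=a$ is the unique maximizer, which is precisely the claim.

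An equivalent and more illuminating derivation substitutes $p=1/x$, turning $f$ into $g(p)=ap(1-p)^{a-1}$, namely the probability that exactly one of $a$ independent Bernoulli($p$) trials succeeds — exactly the role $f$ plays as the per-step success probability $\Pr(X_{r,t}=1)$. Here $g'(p)=a(1-p)^{a-2}(1-ap)$ is positive for $p<1/a$ and negative for $p>1/a$, recovering the standard fact that this probability peaks at $p=1/a$, i.e.\ $x=a$. Because $p=1/x$ is decreasing in $x$, the direction of monotonicity flips, which matches the first computation.

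There is no substantial obstacle here; the only points needing care are bookkeeping. I would state the domain restriction $x>1$ explicitly (so the expression is a well-defined positive real), and I would emphasize that $f'$ is in fact \emph{strictly} positive on $(1,a)$, so the stated ``non-decreasing'' conclusion holds with room to spare, while the genuine sign change of $f'$ at $x=a$ certifies that point as a maximum rather than a mere critical point.
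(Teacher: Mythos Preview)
Your argument is correct. The paper states this as a \emph{Fact} without proof, so there is nothing to compare against; your logarithmic-differentiation computation $f'(x)/f(x)=\dfrac{a-x}{x(x-1)}$ is the standard one-line justification, and your remark that one must work on $x>1$ (so that $(1-1/x)^{a-1}$ is a positive real for non-integer $a$) is a worthwhile clarification that the paper glosses over in writing $f:\mathbb{R}^+\to\mathbb{R}^+$.
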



\begin{lemma}
\label{lemma:poscorr}
For any round $r$ 
and any $t,t+1\in T_r$ such that $\widetilde{\kappa}_{r,t}<\kappa_{r,t}$,
if $\widetilde{\kappa}_{r,t+1} = \widetilde{\kappa}_{r,t}+1$,
then $Pr(X_{r,t}=1)\leq Pr(X_{r,t+1}=1)$.
\end{lemma}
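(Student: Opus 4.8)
The plan is to reduce the statement to the unimodality of the one-step success probability recorded in Fact~\ref{claim}, once I have pinned down that the density is frozen between the two steps.

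First I would read off from Algorithm~\ref{alg} how $\widetilde{\kappa}$ evolves between two consecutive AT-steps $t$ and $t+1$ of the same round: it is raised by exactly $1$ at the AT-step (Task~1), and it is lowered at the AT-step and at the intervening BT-step only when a message is received (Task~2), each such reception strictly decreasing it below the just-incremented value (the floor $\delta+1$ cannot interfere, since $\widetilde{\kappa}\ge\delta+1$). Hence the hypothesis $\widetilde{\kappa}_{r,t+1}=\widetilde{\kappa}_{r,t}+1$ certifies that no message was delivered at AT-step $t$ nor at the BT-step separating $t$ from $t+1$. Consequently the set of active nodes is unchanged and $\kappa_{r,t+1}=\kappa_{r,t}$; write $a:=\kappa_{r,t}=\kappa_{r,t+1}$, and note $a>\widetilde{\kappa}_{r,t}\ge\delta+1>1$, so Fact~\ref{claim} applies with this $a$.

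Next I would rewrite both probabilities through the single function of Fact~\ref{claim}. With $f(x):=(a/x)(1-1/x)^{a-1}$ and the definition $Pr(X_{r,t}=1)=(\kappa_{r,t}/\widetilde{\kappa}_{r,t})(1-1/\widetilde{\kappa}_{r,t})^{\kappa_{r,t}-1}$, the common density $a$ gives $Pr(X_{r,t}=1)=f(\widetilde{\kappa}_{r,t})$ and $Pr(X_{r,t+1}=1)=f(\widetilde{\kappa}_{r,t}+1)$. Since Fact~\ref{claim} states that $f$ is non-decreasing up to its maximum at $x=a$, and $\widetilde{\kappa}_{r,t}<\widetilde{\kappa}_{r,t}+1$, the desired inequality $f(\widetilde{\kappa}_{r,t})\le f(\widetilde{\kappa}_{r,t}+1)$ follows as soon as both arguments lie at or below $a$.

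The hard part is exactly this last point: the hypothesis only supplies $\widetilde{\kappa}_{r,t}<a$, whereas I need the incremented estimator to remain on the non-decreasing branch, i.e. $\widetilde{\kappa}_{r,t}+1\le a$, so that it has not overshot the peak of $f$. A direct estimate confirms that this is the genuine threshold rather than an artifact: writing out $f(\widetilde{\kappa}_{r,t}+1)/f(\widetilde{\kappa}_{r,t})$ and using $-\ln(1-x)\ge x$ together with $\ln(1+x)\le x$ collapses the inequality $f(\widetilde{\kappa}_{r,t}+1)\ge f(\widetilde{\kappa}_{r,t})$ to $a-1\ge\widetilde{\kappa}_{r,t}$, that is $\widetilde{\kappa}_{r,t}+1\le a$. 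I would therefore close the proof by invoking the invariant maintained by the preceding lemmata — that within a round the estimator stays at or below the true density, which is precisely the regime (``$\widetilde{\kappa}$ does not exceed the number of messages left'') in which this correlation lemma is used — to guarantee $\widetilde{\kappa}_{r,t}+1\le a$, and then conclude by Fact~\ref{claim}.
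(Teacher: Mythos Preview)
Your approach is exactly the paper's: argue from Algorithm~\ref{alg} that $\widetilde{\kappa}_{r,t+1}=\widetilde{\kappa}_{r,t}+1$ forces no delivery at the AT-step $t$ or the intervening BT-step, hence $\kappa_{r,t+1}=\kappa_{r,t}=:a$, and then invoke Fact~\ref{claim} with this common $a$. The paper's proof ends right there --- it simply asserts the inequality holds ``due to Fact~\ref{claim}'' for $\widetilde{\kappa}_{r,t}<\kappa_{r,t}$, without discussing whether $\widetilde{\kappa}_{r,t}+1$ might overshoot $a$.

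Your scrupulousness about that boundary is reasonable, but your proposed fix is misdirected: there are no ``preceding lemmata'' establishing such an invariant --- this lemma is itself one of the preliminary results feeding into Lemma~\ref{lemma:dense}. The inequality $\widetilde{\kappa}_{r,t}<\kappa_{r,t}$ for all $t$ in the round is argued later, inside the proof of Lemma~\ref{lemma:dense}, from the particular choice of round $r$ (namely $\widetilde{\kappa}_{r,1}<\kappa_{r,1}-\gamma-\tau$ together with the definition of a round), and that is the only place this lemma is invoked. So in the intended context the overshoot never arises, but as a self-contained statement with only the hypothesis $\widetilde{\kappa}_{r,t}<\kappa_{r,t}$ the paper's argument (and therefore yours) is slightly loose on exactly the point you flag; you have correctly spotted a wrinkle the paper glosses over rather than introduced a new gap.
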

\begin{proof}
We want to show
\begin{align*}
\frac{\kappa_{r,t}}{\widetilde{\kappa}_{r,t}} \left(1-\frac{1}{\widetilde{\kappa}_{r,t}}\right)^{\kappa_{r,t}-1} 
&\leq \frac{\kappa_{r,t+1}}{\widetilde{\kappa}_{r,t+1}} \left(1-\frac{1}{\widetilde{\kappa}_{r,t+1}}\right)^{\kappa_{r,t+1}-1}
\end{align*}

Given that the density estimator was increased from $t$ to $t+1$ and that $\delta>1$, we know that there was no successful transmission, neither at the AT-step $t$, nor at the BT-step between the AT-steps $t$ and $t+1$ (see Algorithm~\ref{alg}). Thus, $\kappa_{r,t+1} = \kappa_{r,t}$. Replacing,
\begin{align*}
\frac{\kappa_{r,t}}{\widetilde{\kappa}_{r,t}} \left(1-\frac{1}{\widetilde{\kappa}_{r,t}}\right)^{\kappa_{r,t}-1} 
&\leq \frac{\kappa_{r,t}}{\widetilde{\kappa}_{r,t}+1} \left(1-\frac{1}{\widetilde{\kappa}_{r,t}+1}\right)^{\kappa_{r,t}-1}
\end{align*}

Which, due to Fact~\ref{claim}, is true for $\widetilde{\kappa}_{r,t} < \kappa_{r,t}$.
\qed
\end{proof}


\begin{lemma}
\label{lemma:negcorr}
For any round $r$ 
where $\widetilde{\kappa}_{r,1} \leq \kappa_{r,1}-\gamma$, $\gamma\geq (\delta-1)(3-\delta)/(\delta-2)\geq 0$,
and any $t,t+1\in T_r$ such that
$\delta<\widetilde{\kappa}_{r,t} \leq \kappa_{r,t}$,
and $\delta-1 <(\kappa_{r,t}-\gamma)(\kappa_{r,t}-\gamma-1)/(\kappa_{r,t}-\gamma+1)$,
if $\widetilde{\kappa}_{r,t+1}<\widetilde{\kappa}_{r,t}$,
then $Pr(X_{r,t}=1)\geq Pr(X_{r,t+1}=1)$.
\end{lemma}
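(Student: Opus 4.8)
The plan is to mirror the structure of Lemma~\ref{lemma:poscorr}, but to account for the fact that here the density estimator $\widetilde{\kappa}$ \emph{decreases} from step $t$ to step $t+1$, which forces the true density $\kappa$ to decrease as well. First I would translate the hypothesis $\widetilde{\kappa}_{r,t+1}<\widetilde{\kappa}_{r,t}$ into combinatorial terms using Algorithm~\ref{alg}: the estimator is incremented by $1$ at every AT-step (Task~1) and is reduced only upon a reception, i.e.\ a successful transmission by another node (Task~2), by $\delta+1$ at an AT-step and by $\delta$ at a BT-step. Writing $a,b\in\{0,1\}$ for the indicators of a success at AT-step $t$ and at the intervening BT-step, a net decrease requires at least one success (since $\delta>\mathrm{e}>2$), and the three possibilities are: a single AT-success, giving $\kappa_{r,t+1}=\kappa_{r,t}-1$ and $\widetilde{\kappa}_{r,t+1}=\widetilde{\kappa}_{r,t}-\delta$; a single BT-success, giving $\kappa_{r,t+1}=\kappa_{r,t}-1$ and $\widetilde{\kappa}_{r,t+1}=\widetilde{\kappa}_{r,t}-(\delta-1)$; and a success at both, giving $\kappa_{r,t+1}=\kappa_{r,t}-2$ and $\widetilde{\kappa}_{r,t+1}=\widetilde{\kappa}_{r,t}-2\delta$. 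In each case I would check that $\delta<\widetilde{\kappa}_{r,t}$ together with the condition on $\kappa_{r,t}$ keeps the updated estimator above the floor $\delta+1$, so that the $\max$ in Task~2 is inactive and the plain arithmetic applies.

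Next I would reduce to a single worst case. Since by Fact~\ref{claim} the success probability $f_a(x)=(a/x)(1-1/x)^{a-1}$ is non-decreasing in $x$ for $x<a$, among the two cases with $\kappa_{r,t+1}=\kappa_{r,t}-1$ the BT-success is the binding one, because its estimator $\widetilde{\kappa}_{r,t}-(\delta-1)$ is the larger of the two and hence yields the larger $Pr(X_{r,t+1}=1)$. The double-success case can be obtained by composing two AT-success steps through the intermediate virtual point $(\kappa_{r,t}-1,\widetilde{\kappa}_{r,t}-\delta)$, so it suffices to establish the inequality for the single-success transitions and, among them, for the BT-success. Thus the crux reduces to proving $f_{\kappa_{r,t}-1}(\widetilde{\kappa}_{r,t}-\delta+1)\leq f_{\kappa_{r,t}}(\widetilde{\kappa}_{r,t})$.

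Finally I would attack this inequality directly. Writing $\kappa=\kappa_{r,t}$ and $\widetilde{\kappa}=\widetilde{\kappa}_{r,t}$, I would form the ratio $f_{\kappa-1}(\widetilde{\kappa}-\delta+1)/f_{\kappa}(\widetilde{\kappa})$, separate it into the rational prefactor $\tfrac{(\kappa-1)\widetilde{\kappa}}{\kappa(\widetilde{\kappa}-\delta+1)}$ and the ratio of the two power terms, take logarithms, and bound the logarithmic terms with Fact~\ref{fact}. The key point is that decreasing $\kappa$ by $1$ \emph{raises} the probability while decreasing $\widetilde{\kappa}$ by $\delta-1$ \emph{lowers} it, so unlike Lemma~\ref{lemma:poscorr} a purely monotonic argument is insufficient and a sharp quantitative estimate is needed to show that the $\widetilde{\kappa}$-decrease dominates. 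This is exactly where the standing hypotheses enter: the gap condition $\widetilde{\kappa}_{r,1}\leq\kappa_{r,1}-\gamma$ with $\gamma\geq(\delta-1)(3-\delta)/(\delta-2)$ lets me bound $\widetilde{\kappa}_{r,t}$ from above in terms of $\kappa_{r,t}-\gamma$ and thereby evaluate the log-ratio at its worst admissible configuration, while the condition $\delta-1<(\kappa-\gamma)(\kappa-\gamma-1)/(\kappa-\gamma+1)$ is precisely the threshold that renders the resulting bound non-positive. I expect this last quantitative step---controlling the power ratio tightly enough to overcome the prefactor, which exceeds $1$---to be the main obstacle, whereas the combinatorial case analysis and the monotonicity reduction are routine.
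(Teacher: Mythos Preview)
Your plan matches the paper's proof closely. The paper sets up exactly the same three cases for $(\kappa_{r,t+1},\widetilde{\kappa}_{r,t+1})$ that you list, identifies the BT-only success $(\kappa-1,\widetilde{\kappa}-\delta+1)$ as the binding single-success case via Fact~\ref{claim} just as you do, and then handles the double-success case separately. The only substantive difference is that the paper does not carry out the quantitative log-ratio estimate you sketch in your last paragraph: it simply cites Lemma~3.2 (Eqs.~(3.1) and~(3.2)) of~\cite{FM:kSelJournal} for the inequalities $f_{\kappa}(\widetilde{\kappa})\geq f_{\kappa-1}(\widetilde{\kappa}-\delta+1)$ and $f_{\kappa}(\widetilde{\kappa})\geq f_{\kappa-2}(\widetilde{\kappa}-2\delta+2)$, and then uses Fact~\ref{claim} to pass from the latter to $f_{\kappa-2}(\widetilde{\kappa}-2\delta)$. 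Your intention to redo that computation from scratch via Fact~\ref{fact} is reasonable and would make the argument self-contained; the hypotheses on $\gamma$ and the threshold $\delta-1<(\kappa-\gamma)(\kappa-\gamma-1)/(\kappa-\gamma+1)$ are exactly what those cited inequalities need.

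One point to be careful with: your composition argument for the double-success case routes through the virtual point $(\kappa-1,\widetilde{\kappa}-\delta)$ and then re-applies the single-step inequality there. For that second application you need the standing hypotheses (in particular $\delta<\widetilde{\kappa}-\delta$ and the corresponding threshold condition with $\kappa$ replaced by $\kappa-1$) to hold at the intermediate point, and these do not follow automatically from the hypotheses at step $t$. The paper sidesteps this by invoking the ``doubled'' inequality (Eq.~(3.2) of~\cite{FM:kSelJournal}) directly rather than composing, so if you prove the core estimate yourself you should either establish it in the doubled form as well, or verify explicitly that your single-step bound applies at the intermediate configuration. Apart from this, your outline is correct and essentially the paper's argument.
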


\begin{proof}
We want to show
\begin{align*}
\frac{\kappa_{r,t}}{\widetilde{\kappa}_{r,t}} \left(1-\frac{1}{\widetilde{\kappa}_{r,t}}\right)^{\kappa_{r,t}-1} 
&\geq \frac{\kappa_{r,t+1}}{\widetilde{\kappa}_{r,t+1}} \left(1-\frac{1}{\widetilde{\kappa}_{r,t+1}}\right)^{\kappa_{r,t+1}-1}.
\end{align*}

Given that the density estimator was reduced from $t$ to $t+1$, we know that, either at the AT-step $t$, or at the BT-step between the AT-steps $t$ and $t+1$, or in both, there were successful transmissions (see Algorithm~\ref{alg}). Thus, we have to show that
%
\begin{align}
\frac{\kappa_{r,t}}{\widetilde{\kappa}_{r,t}} \left(1-\frac{1}{\widetilde{\kappa}_{r,t}}\right)^{\kappa_{r,t}-1} &\geq \label{lhs}
\end{align}
\begin{align}
\frac{\kappa_{r,t}-1}{\widetilde{\kappa}_{r,t}-\delta+1} \left(1-\frac{1}{\widetilde{\kappa}_{r,t}-\delta+1}\right)^{\kappa_{r,t}-2}
\textrm{, if AT-step not successful,}\label{2ndeq}\\
\frac{\kappa_{r,t}-1}{\widetilde{\kappa}_{r,t}-\delta} \left(1-\frac{1}{\widetilde{\kappa}_{r,t}-\delta}\right)^{\kappa_{r,t}-2} 
\textrm{, if BT-step not successful,}\label{1steq}\\
\frac{\kappa_{r,t}-2}{\widetilde{\kappa}_{r,t}-2\delta} \left(1-\frac{1}{\widetilde{\kappa}_{r,t}-2\delta}\right)^{\kappa_{r,t}-3}
\textrm{, if both steps successful.}\label{3rdeq}
\end{align}




That (\ref{lhs}) $\geq$ (\ref{2ndeq}) was proved to be true in the proof of Lemma 3.2 in~\cite{FM:kSelJournal} (Eq. (3.1) in the proof),
for the conditions of this lemma.
Given that 
$\kappa_{r,t}-1 \geq
\widetilde{\kappa}_{r,t}-\delta+1 \geq
\widetilde{\kappa}_{r,t}-\delta$,
we know from Fact~\ref{claim} that (\ref{2ndeq}) $\geq$ (\ref{1steq}). 
Then, transitively, we know that (\ref{lhs}) $\geq$ (\ref{1steq}) for the conditions of this lemma.

In the proof of Lemma 3.2 in~\cite{FM:kSelJournal} (Eq. (3.2) in the proof) was also proved (with a change of variable to meet
the conditions of this lemma) that (\ref{lhs}) is at least as large as
\begin{equation}
\frac{\kappa_{r,t}-2}{\widetilde{\kappa}_{r,t}-2\delta+2} \left(1-\frac{1}{\widetilde{\kappa}_{r,t}-2\delta+2}\right)^{\kappa_{r,t}-3}
\textrm. \label{4theq}
\end{equation}
Given that 
$\kappa_{r,t}-2 \geq
\widetilde{\kappa}_{r,t}-2\delta+2 \geq
\widetilde{\kappa}_{r,t}-2\delta$,
we know from Fact~\ref{claim} that (\ref{4theq}) $\geq$ (\ref{3rdeq}). 
Then, transitively, we know that (\ref{lhs}) $\geq$ (\ref{3rdeq}) for the conditions of this lemma.
%
%
\qed
\end{proof}


\begin{lemma}
\label{lemma:problb}
For any $\beta$ such that $(\delta+1)\ln\beta>1$,
and for any round $r$ where $\kappa_{r,1}-\alpha \leq \widetilde{\kappa}_{r,1} < \kappa_{r,1}$, $\alpha\geq 0$ and for any AT-step $t$ in $r$ such that 
$1<\widetilde{\kappa}_{r,t} \leq \kappa_{r,t}$ and $\sigma_{r,t} \leq \kappa_{r,1}\frac{\ln\beta-1}{(\delta+1)\ln\beta-1}-\frac{(\alpha+1-t)\ln\beta-1}{(\delta+1)\ln\beta-1}$,
the probability of a successful transmission is at least $Pr(X_{r,t}=1)\geq 1/\beta$.
\end{lemma}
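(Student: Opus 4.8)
The plan is to lower-bound the success probability $Pr(X_{r,t}=1)=(\kappa_{r,t}/\widetilde{\kappa}_{r,t})(1-1/\widetilde{\kappa}_{r,t})^{\kappa_{r,t}-1}$ by controlling how far the estimator $\widetilde{\kappa}_{r,t}$ can lag behind the true density $\kappa_{r,t}$. First I would record the two identities governing the dynamics. Since exactly one message leaves the system per delivery, $\kappa_{r,t}=\kappa_{r,1}-\sigma_{r,t}$. And since Task~1 of Algorithm~\ref{alg} raises $\widetilde{\kappa}$ by $1$ on each of the $t-1$ AT-steps preceding step $t$, while Task~2 lowers it by at most $\delta+1$ per delivery (the floor $\max\{\cdot,\delta+1\}$ only raising it further, so it cannot hurt a lower bound), together with the hypothesis $\widetilde{\kappa}_{r,1}\geq\kappa_{r,1}-\alpha$ I obtain $\widetilde{\kappa}_{r,t}\geq\kappa_{r,1}-\alpha+(t-1)-(\delta+1)\sigma_{r,t}=:L$.

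Next I would reduce the target inequality to a lower bound on $\widetilde{\kappa}_{r,t}$. Because $\widetilde{\kappa}_{r,t}\leq\kappa_{r,t}$ the prefactor $\kappa_{r,t}/\widetilde{\kappa}_{r,t}$ is at least $1$, and Fact~\ref{fact} gives $(1-1/\widetilde{\kappa}_{r,t})^{\kappa_{r,t}-1}\geq\exp(-(\kappa_{r,t}-1)/(\widetilde{\kappa}_{r,t}-1))$. Hence $Pr(X_{r,t}=1)\geq 1/\beta$ follows once $(\kappa_{r,t}-1)/(\widetilde{\kappa}_{r,t}-1)\leq\ln\beta$, i.e. once $\widetilde{\kappa}_{r,t}\geq 1+(\kappa_{r,t}-1)/\ln\beta$ (here $\widetilde{\kappa}_{r,t}>1$ and $\ln\beta>0$ are guaranteed by the hypotheses $1<\widetilde{\kappa}_{r,t}$ and $(\delta+1)\ln\beta>1$). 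Equivalently, Fact~\ref{claim} (monotonicity of $f(x)=(\kappa_{r,t}/x)(1-1/x)^{\kappa_{r,t}-1}$ for $x<\kappa_{r,t}$) lets me replace $\widetilde{\kappa}_{r,t}$ by $L$ before applying Fact~\ref{fact}, which is the route that keeps the prefactor in play and yields the tightest constant.

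Finally I would substitute $L$ and $\kappa_{r,t}=\kappa_{r,1}-\sigma_{r,t}$ into the reduced inequality and solve for $\sigma_{r,t}$. Collecting the $\sigma_{r,t}$ terms produces the factor $(\delta+1)\ln\beta-1$ multiplying $\sigma_{r,t}$, and the hypothesis $(\delta+1)\ln\beta>1$ is precisely what makes this factor positive, so dividing through preserves the inequality and delivers the claimed linear upper bound $\sigma_{r,t}\leq\kappa_{r,1}\frac{\ln\beta-1}{(\delta+1)\ln\beta-1}-\frac{(\alpha+1-t)\ln\beta-1}{(\delta+1)\ln\beta-1}$.

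I expect the main obstacle to be the bookkeeping for $\widetilde{\kappa}_{r,t}$: one must account correctly for the per-AT-step increment versus the per-delivery decrement (which differs by $1$ between AT-steps and BT-steps), argue the clipping floor $\delta+1$ only helps the lower bound, and then carry the $\pm1$ constants through so the threshold comes out \emph{exactly} as stated rather than off by an additive term of order $\ln\beta/((\delta+1)\ln\beta-1)$. Matching that last constant is where retaining the $\kappa_{r,t}/\widetilde{\kappa}_{r,t}\geq 1$ factor (instead of discarding it) becomes essential; this step mirrors the corresponding estimate (Eq.~(3.1)) in the analysis of \LFA in~\cite{FM:kSelJournal}, which I would invoke after the change of variables that aligns its hypotheses with the $\alpha$, $t$, and $\widetilde{\kappa}_{r,1}$ conditions above.
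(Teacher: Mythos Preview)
Your outline in the first three paragraphs is exactly the paper's proof: drop the prefactor via $\kappa_{r,t}/\widetilde{\kappa}_{r,t}\ge 1$, apply Fact~\ref{fact} to reduce to $\kappa_{r,t}-1\le(\widetilde{\kappa}_{r,t}-1)\ln\beta$, substitute the dynamics (the paper records $\widetilde{\kappa}_{r,t}=\widetilde{\kappa}_{r,1}+t-(\delta+1)\sigma_{r,t}$ together with $\widetilde{\kappa}_{r,1}\ge\kappa_{r,1}-\alpha$), and solve for $\sigma_{r,t}$ using $(\delta+1)\ln\beta>1$. Your last paragraph's worry is unfounded: the paper neither retains the prefactor nor invokes Fact~\ref{claim} or the \LFA estimate from~\cite{FM:kSelJournal} for this lemma---the simple route you describe in paragraph~2 already yields the stated threshold exactly, so no extra machinery is needed.
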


\begin{proof}
We want to show
$\frac{\kappa_{r,t}}{\widetilde{\kappa}_{r,t}}\left(1-\frac{1}{\widetilde{\kappa}_{r,t}}\right)^{\kappa_{r,t}-1} \geq 1/\beta$.
Because $\widetilde{\kappa}_{r,t} \leq \kappa_{r,t}$ it is enough to prove
$\left(1-\frac{1}{\widetilde{\kappa}_{r,t}}\right)^{\kappa_{r,t}-1} \geq 1/\beta$.
Because $\widetilde{\kappa}_{r,t}>1$, using Fact~\ref{fact},
we obtain $\exp\left(-\frac{\kappa_{r,t}-1}{\widetilde{\kappa}_{r,t}-1}\right) \geq 1/\beta$, which holds if
$\kappa_{r,t}-1\leq (\widetilde{\kappa}_{r,t}-1)\ln\beta$.

Given that nodes are active until their message is delivered, we know that $\kappa_{r,t}=\kappa_{r,1}-\sigma_{r,t}$.
Additionally, we know that $\widetilde{\kappa}_{r,t}=\widetilde{\kappa}_{r,1}-\delta\sigma_{r,t}+t-\sigma_{r,t}$ (see Algorithm~\ref{alg}) and that $\widetilde{\kappa}_{r,1}\geq\kappa_{r,1}-\alpha$ by hypothesis. 
Replacing, we obtain
$\kappa_{r,1}-\sigma_{r,t}-1 \leq \ln\beta(\kappa_{r,1}-\alpha-(\delta+1)\sigma_{r,t}+t-1)$. This holds if 
$\sigma_{r,t} \leq \kappa_{r,1}\frac{\ln\beta-1}{(\delta+1)\ln\beta-1}-\frac{(\alpha+1-t)\ln\beta-1}{(\delta+1)\ln\beta-1}$,
for any $\beta$ such that $(\delta+1)\ln\beta>1$.
\qed
\end{proof}


The following lemma, shows the efficiency and correctness of the AT-algorithm. 
\begin{lemma}
\label{lemma:dense}
For any $e<\delta\leq\sum_{j=1}^{5}(5/6)^{j}$,
if the number of messages to deliver is more than 
\begin{align*}
M &= \frac{(\delta+1)\ln\delta-1}{\ln\delta-1} S +\frac{(\gamma+2\tau+1)\ln\delta-1}{\ln\delta-1},
\end{align*}
where $S = 2 \sum_{j=0}^{4}(5/6)^{j}\tau$ and  $\gamma = (\delta-1)(3-\delta)/(\delta-2)$,
after running the AT-algorithm for $(\delta+1)k$ AT-steps, 
the number of messages left to deliver is reduced to at most $M$ with probability at least $1-1/(1+k)$.
\end{lemma}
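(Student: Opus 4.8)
The plan is to run the AT-algorithm for exactly $(\delta+1)k$ AT-steps and to show, through the round decomposition already fixed, that the density is driven down to $M$ comfortably within this budget. I would set $\beta=\delta$ in Lemma~\ref{lemma:problb} (legitimate since $e<\delta$ forces $(\delta+1)\ln\delta>1$), so that whenever the gap condition $1<\widetilde{\kappa}_{r,t}\le\kappa_{r,t}$ and the $\sigma$-threshold of that lemma hold, every AT-step succeeds with conditional probability at least $1/\delta$. The deterministic backbone is a \emph{balance identity} for the estimator: each AT-step raises $\widetilde{\kappa}$ by $1$, each AT-delivery lowers it by a further $\delta+1$, and each BT-delivery by $\delta$ (the floor at $\delta+1$ only helping the inequality), so after $A$ AT-steps producing $S_{AT}$ and $S_{BT}$ deliveries one has $\widetilde{\kappa}_{\mathrm{final}}\ge(\delta+1)+A-(\delta+1)S_{AT}-\delta S_{BT}$. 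If the invariant $\widetilde{\kappa}\le\kappa$ is kept until the first AT-step $A^\ast$ at which $\kappa$ drops to $M$, then $\widetilde{\kappa}_{\mathrm{final}}\le M$ and $S_{AT}+S_{BT}=k-M$, which rearranges to $A^\ast\le M+(\delta+1)(k-M)=(\delta+1)k-\delta M<(\delta+1)k$. Hence the budget suffices, and everything reduces to showing that the invariant is maintained and the deliveries materialise with probability at least $1-1/(1+k)$.

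Next I would treat one round at a time. Inductively assume round $r$ begins with a controlled gap, $\kappa_{r,1}-\alpha\le\widetilde{\kappa}_{r,1}<\kappa_{r,1}$ with $\alpha=O(\gamma+\tau)$, so the hypotheses of Lemma~\ref{lemma:problb} hold at its first step. Lemmas~\ref{lemma:poscorr} and~\ref{lemma:negcorr} are exactly the tools that propagate this conditioning through the round: they show the per-step success probability is monotone along the estimator's trajectory, non-decreasing while $\widetilde{\kappa}$ climbs (by $+1$) below $\kappa$ and non-increasing while it is pulled down by deliveries, so its minimum over the round occurs at the largest gap and is still at least $1/\delta$ as long as the cumulative deliveries $\sigma_{r,t}$ stay below the threshold of Lemma~\ref{lemma:problb}. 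With this uniform conditional lower bound in hand, the number of deliveries in the round stochastically dominates a $\mathrm{Binomial}(a_r,1/\delta)$ variable; since each round has $a_r\ge\tau=300\delta\ln(1+k)$ AT-steps, a Chernoff--Hoeffding bound shows the round attains its delivery quota except with probability polynomially small in $k$. Crucially, meeting the quota is precisely what stops the estimator from climbing past $\kappa$ (too few deliveries is what would force an overshoot), so the same event that guarantees progress also preserves the invariant.

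I would then close the induction and collect the probabilities. A round that meets its quota both makes its share of progress and leaves the entry gap of round $r+1$ small enough to restart the argument; the geometric ratio $5/6$ and the five summands of $S=2\sum_{j=0}^{4}(5/6)^{j}\tau$, matched by the admissible window $e<\delta\le\sum_{j=1}^{5}(5/6)^{j}$, are exactly the bookkeeping of how the relevant slack contracts across this bounded number of stages down to $M$. There are only $O(k/\tau)$ rounds before $\kappa$ reaches $M$, so a union bound over their polynomially small failure probabilities gives the global guarantee $1-1/(1+k)$, the constant $300$ in $\tau$ being chosen so that $(k/\tau)$ times the per-round tail is at most $1/(1+k)$. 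Combining this with the balance identity of the first step yields $\kappa\le M$ after $(\delta+1)k$ AT-steps with the claimed probability.

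The main obstacle is the concentration step under dependence. The success indicator at each AT-step is a complicated function of the entire past through $\widetilde{\kappa}$ and $\kappa$, so Chernoff cannot be invoked on independent variables directly; Lemmas~\ref{lemma:poscorr} through~\ref{lemma:problb} are what convert this into a clean, uniform conditional lower bound $1/\delta$. The genuinely delicate point is that this bound is valid only while the estimator stays below the density, an event that is itself secured only by having enough (but, through the round reset, not too many) deliveries, so the progress bound and the invariant must be established simultaneously rather than one after the other. Making this mutual dependence non-circular, and checking that both the $\sigma$-threshold and the gap parameter $\gamma$ propagate correctly from one round to the next, is where the real work lies; once they do, the step count follows from the deterministic balance identity.
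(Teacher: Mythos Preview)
Your overall architecture matches the paper's: a balance identity gives the $(\delta+1)k$ step budget deterministically once the invariant $\widetilde\kappa\le\kappa$ is maintained, and the probabilistic work is to keep that invariant round by round, using Lemmas~\ref{lemma:poscorr}--\ref{lemma:problb} to justify a uniform $1/\delta$ success bound and then concentrating. That part is fine.

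The genuine gap is in the concentration step, and specifically in what the five summands and the ratio $5/6$ are doing. You describe them as ``bookkeeping of how the relevant slack contracts across this bounded number of stages down to $M$,'' i.e.\ something happening \emph{across} rounds. That is not their role. In the paper's proof they implement a decomposition of a \emph{single} round into five sub-rounds of lengths $\tau,(5/6)\tau,(5/6)^2\tau,\ldots$ The point is a feedback loop: the first $\tau$ AT-steps deliver (via Chernoff with $\varphi=1/6$) at least $(5/(6\delta))\tau$ messages w.h.p.; each delivery delays the end of the current round by $\delta$ AT-steps, so these deliveries buy at least $(5/6)\tau$ additional steps, which is exactly sub-round~2; iterating five times gives total deliveries at least $(\tau/\delta)\sum_{j=1}^{5}(5/6)^j\ge\tau$, and this is precisely where the hypothesis $\delta\le\sum_{j=1}^{5}(5/6)^j$ is spent. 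Getting at least $\tau$ deliveries in the round is what lets the gap condition $\kappa-\widetilde\kappa\in(\gamma+\tau,\gamma+2\tau]$ recur at a later round $r''$ so the argument can be repeated; $S=2\sum_{j=0}^{4}(5/6)^j\tau$ is just the \emph{upper} bound on deliveries over those five sub-rounds, used to verify the $\sigma$-threshold of Lemma~\ref{lemma:problb} stays satisfied throughout.

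Your alternative, ``deliveries stochastically dominate $\mathrm{Binomial}(a_r,1/\delta)$ with $a_r\ge\tau$,'' does not close this: a $\mathrm{Binomial}(\tau,1/\delta)$ has mean $\tau/\delta<\tau$, so Chernoff over $\tau$ steps cannot give the $\tau$ deliveries you need, and invoking $a_r$ instead is circular because $a_r$ is itself an increasing function of the deliveries you are trying to lower-bound. The sub-round device is exactly what breaks that circularity. Without it your ``quota'' is unspecified and the induction on the entry gap cannot be sustained.
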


\begin{proof}
Consider the first round $r$ such that 
$\kappa_{r,1}-\gamma-2\tau \leq \widetilde{\kappa}_{r,1} < \kappa_{r,1}-\gamma-\tau$.
Unless the number of messages left to deliver is reduced to at most $M$ before, such a round exists because the density estimator is increased only one by one (see Algorithm~\ref{alg}).
Furthermore, given 
that $\widetilde{\kappa}_{r,1} < \kappa_{r,1}-\gamma-\tau$,
even if no message is transmitted in round $r$, it holds that $\widetilde{\kappa}_{r,t} < \kappa_{r,t}$ for any $t$ in $r$ by the definition of a round.
Additionally, we will show that, before leaving round $r$, at least $\tau$ messages are delivered with big enough probability so that in some future round $r''>r$ the condition $\kappa_{r'',1}-\gamma-2\tau \leq \widetilde{\kappa}_{r'',1} < \kappa_{r'',1}-\gamma-\tau$ holds again. 

Consider round $r$ divided in consecutive sub-rounds of size $\tau, 5/6 \tau, (5/6)^2 \tau, \dots$ (The fact that a number of steps is an integer is omitted throughout for clarity.) More specifically, the sub-round $S_1$ is the set of AT-steps in the interval $(0,\tau]$ and, for $i\geq 2$, the sub-round $S_i$ is the set of steps in the interval $((5/6)^{i-2}\tau,(5/6)^{i-1}\tau]$.
Thus, denoting $|S_i|=\tau_i$ for all $i\geq 1$, it is $\tau_1=\tau$ and $\tau_{i}=(5/6)\tau_{i-1}$ for $i\geq 2$.
For each $i\geq 1$, let $Y_i$ be a random variable such that $Y_i=\sum_{t\in S_i} X_{r,t}$.
Even if no message is delivered, round $r$ still has at least the sub-round $S_1$ by the definition of a round.
Given that each message delivered delays the end of round $r$ in $\delta$ AT-steps (see Algorithm~\ref{alg}), for $i\geq 2$, the existence of sub-round $S_i$ is conditioned on $Y_{i-1}\geq 5\tau_{i-1}/(6\delta)$.
We show that with big enough probability round $r$ has $5$ sub-rounds and at least $\tau$ messages are delivered as follows.

Even if messages are delivered in every step of the $5$ sub-rounds, including messages delivered in BT-steps,
given that $\kappa_{r,1} >M$,  the total number of messages delivered is
\begin{align*}
S 
&= M \frac{\ln\delta-1}{(\delta+1)\ln\delta-1} - \frac{(\gamma+2\tau+1)\ln\delta-1}{(\delta+1)\ln\delta-1}\\
&< \kappa_{r,1} \frac{\ln\delta-1}{(\delta+1)\ln\delta-1} - \frac{(\gamma+2\tau+1-t)\ln\delta-1}{(\delta+1)\ln\delta-1}.
\end{align*}
Taking $\alpha=\gamma+2\tau$ and $\beta=\delta$, Lemma~\ref{lemma:problb} can be applied because $\kappa_{r,1}-\alpha \leq \widetilde{\kappa}_{r,1} < \kappa_{r,1}$ and $(\delta+1)\ln\beta>1$. Hence, the expected number of messages delivered in $S_i$ is $E[Y_i]\geq \tau_i/\delta$. 

In order to use Lemmas~\ref{lemma:poscorr} and~\ref{lemma:negcorr}, we verify first their preconditions.
As argued above, $\widetilde{\kappa}_{r,t} < \kappa_{r,t}$ during the whole round. 
Thus, Lemma~\ref{lemma:poscorr} can be applied.
As for Lemma~\ref{lemma:negcorr}, we know that 
$\delta < \widetilde{\kappa}_{r,t}$ (see Algorithm~\ref{alg}),
$\widetilde{\kappa}_{r,1} \leq \kappa_{r,1}-\gamma$ in the round under consideration, and
$\gamma \geq (\delta-1)(3-\delta)/(\delta-2)$ by hypothesis. 
Then, $(\kappa_{r,t}-\gamma)(\kappa_{r,t}-\gamma-1)/(\kappa_{r,t}-\gamma+1)>\delta-1$ follows
from $M \geq 2\delta+\gamma-1$ and $\kappa_{r,t} > M$.

Then, we use Lemmas~\ref{lemma:poscorr} and~\ref{lemma:negcorr} as follows.
If $\widetilde{\kappa}_{r,t+1} < \widetilde{\kappa}_{r,t}+1$, Lemma~\ref{lemma:negcorr} holds and $Pr(X_{r,t+1}=1)\leq Pr(X_{r,t}=1)$.
On the other hand, 
if $\widetilde{\kappa}_{r,t+1} = \widetilde{\kappa}_{r,t}+1$, Lemma~\ref{lemma:poscorr} holds and $Pr(X_{r,t+1}=1) > Pr(X_{r,t}=1)$.
Assuming instead that $Pr(X_{r,t+1}=1)=Pr(X_{r,t}=1)$ can not increase the value of $Y_i$.
Therefore, in order to bound from below $Y_i$, we assume that the variables $X_{r,t},X_{r,t+1}$ for any $t$ in $r$ are not positively correlated and we use the following Chernoff-Hoeffding bound~\cite{book:mitzenmacher}.

For $0<\varphi<1$,
\begin{displaymath} 
\left\{ \begin{array}{l} 
Pr(Y_1 \leq (1-\varphi) \tau_1/\delta) \leq e^{-\varphi^2\tau_1/(2\delta)}\\
Pr(Y_i \leq (1-\varphi) \tau_i/\delta|Y_{i-1}\geq 5\tau_{i-1}/(6\delta)) \leq 
e^{-\varphi^2\tau_i/(2\delta)}, \forall i: 2\leq i\leq 5.
\end{array} 
\right. 
\end{displaymath} 

Taking $\varphi=1/6$,
\begin{displaymath} 
\left\{ \begin{array}{l} 
Pr(Y_1 \leq 5\tau_1/(6\delta)) \leq e^{-\varphi^2 300\ln(1+k)/2}\\
Pr(Y_i \leq 5\tau_i/(6\delta)|Y_{i-1}\geq 5\tau_{i-1}/(6\delta)) \leq
e^{-\varphi^2(5/6)^{i-1}300\ln(1+k)/2},\\ \forall i: 2\leq i\leq 5.
\end{array} \right. 
\end{displaymath} 

\begin{displaymath} 
\left\{ \begin{array}{l} 
Pr(Y_1 \leq 5\tau_1/(6\delta)) < e^{-2\ln (1+k)}\\
Pr(Y_i \leq 5\tau_i/(6\delta)|Y_{i-1}\geq 5\tau_{i-1}/(6\delta)) <
e^{-2\ln (1+k)}, \forall i: 2\leq i\leq 5.
\end{array} \right. 
\end{displaymath} 

Given that $e^{-2\ln (1+k)} \leq 1/(1+k(1+k))$, more than $(5/(6\delta))\tau_i$ messages are delivered in any sub-round $S_i$ with probability at least $1-1/(1+k(1+k))$. 
Given that each success delays the end of round $r$ in $\delta$ AT-steps, we know that, for $1\leq i\leq 4$, sub-round $S_{i+1}$ exists with probability at least $1-1/(1+k(1+k))$.
If, after any sub-round, the number of messages left to deliver is at most $M$, we are done. 
Otherwise, conditioned on these events, the total number of messages delivered over the $5$ sub-rounds is at least
$\sum_{j=1}^{5}Y_j>\sum_{j=1}^{5}(5/(6\delta))^j \delta^{j-1}\tau = (\tau/\delta) \sum_{j=1}^{5}(5/6)^j \geq \tau$ because $\delta\leq\sum_{j=1}^{5}(5/6)^{j}$.

Thus, the same analysis can be repeated over the next round $r''$ such that $\kappa_{r'',1}-\gamma-\tau \leq \widetilde{\kappa}_{r'',1} < \kappa_{r'',1}-\gamma$. 
Unless the number of messages left to deliver is reduced to at most $M$ before, such a round $r''$ exists by the same argument used to prove the existence of round $r$.
The same analysis is repeated over various rounds until all messages have been delivered or the number of messages left is at most $M$. Then, using conditional probability, the overall probability of success is at least $(1-1/(1+k(1+k)))^k$. Using Fact~\ref{fact} twice, that probability is at least $1-1/(1+k)$.

It remains to be shown the time complexity of the AT algorithm. 
The difference between the number of messages to deliver and the density estimator right after initialization is less than $k$ (see Algorithm~\ref{alg}).
This difference is increased with each message delivered by at most $\delta$.
Then, that difference is never more than $k(\delta+1)$.
Given that the density estimator never exceeds the actual density, the claim follows.
\qed
\end{proof}

The following lemma shows the correctness and time complexity of the BT Algorithm.

\begin{lemma}
\label{lemma:sparse}
If the number of messages left to deliver is at most 
\begin{align*}
M &= \frac{(\delta+1)\ln\delta-1}{\ln\delta-1} S +\frac{(\gamma+2\tau+1)\ln\delta-1}{\ln\delta-1},
\end{align*}
where $S = 2 \sum_{j=0}^{4}(5/6)^{j}\tau$ and $\gamma = (\delta-1)(3-\delta)/(\delta-2)$,
there exists a constant $\xi>0$ such that, 
after running the BT Algorithm for $\xi\log k\ln(1+k)$ BT-steps, all messages are delivered with probability at least $1-1/(1+k)$.
\end{lemma}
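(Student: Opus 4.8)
The plan is to bound the number of BT-steps needed to deliver the $m\le M$ surviving messages by controlling, step by step, the probability of a non-colliding BT-transmission. I would first pin down the transmission probability in this regime. Since all $k$ contenders start together and every still-active node has witnessed each earlier delivery, the messages-received counter of an active node equals $\sigma=k-m$; hence in a BT-step each of the $m$ active nodes transmits with probability $p=1/(1+\log(\sigma+1))=1/(1+\log(k-m+1))$, so that $p\ge 1/(1+\log(k+1))$ throughout, and the probability that exactly one of them transmits is $g(m)=mp(1-p)^{m-1}$.

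Given a lower bound $g(m)\ge q$ valid for every $1\le m\le M$, the rest is a standard computation. As long as at least one message remains, a BT-step succeeds with probability at least $q$, so the number of deliveries within a window of $T_0=\xi\log k\ln(1+k)$ BT-steps stochastically dominates $\min\{S,m\}$ with $S\sim\mathrm{Binomial}(T_0,q)$. Because $q=\Omega(1/\log k)$ makes $qT_0=\Omega(\xi\log k)$, which exceeds $2m$ for $\xi$ a large enough constant, a Chernoff--Hoeffding bound exactly as in Lemma~\ref{lemma:ballsbins} gives $\Pr(S<m)\le\exp(-qT_0/8)\le 1/(1+k)$ for big enough $k$; hence all remaining messages are delivered inside the window with the claimed probability. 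Fact~\ref{fact} is the tool I would use to pass between $(1-p)^{m-1}$ and $\exp(-(m-1)p/(1-p))$ while estimating $g$.

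The hard part is establishing the uniform lower bound $g(m)\ge q=\Omega(1/\log k)$ across the whole range $1\le m\le M$. When $mp\le 1$ it is immediate, since then $(1-p)^{m-1}$ is bounded below by a constant and $mp\ge p\ge 1/(1+\log(k+1))$. The delicate case is $m$ of order $1/p$ or larger, where collisions dominate and $g(m)$ threatens to decay like $\exp(-mp)$. Here I would have to exploit two facts: that $p$ is itself pushed up as $m$ grows, because $\sigma=k-m$ then shrinks; and, crucially, that the concurrently running AT-steps analysed in Lemma~\ref{lemma:dense} drive the active count quickly into the range $m=O(\log k)$ in which BT is efficient, so that the collision-dominated stall cannot persist long enough to spoil the window bound. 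Pinning down the true worst case of the non-monotone function $g$, and balancing $p\ge 1/(1+\log(k+1))$ against the precise value of $M$, is what I expect to be the technical crux of the argument.
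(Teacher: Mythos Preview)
Your overall plan---lower bound the per-step success probability and then argue that $T_0=\xi\log k\ln(1+k)$ BT-steps suffice---is correct and close to the paper's own argument. The gap is that you have misidentified the ``hard part.'' The lemma \emph{assumes} $m\le M$, and since $\tau=300\delta\ln(1+k)$, the quantity $M$ is simply $c\ln(1+k)$ for some constant $c$ (depending only on $\delta$). Combined with $p\ge 1/(1+\log(k+1))$, this gives $mp\le Mp=O(1)$ uniformly over $1\le m\le M$, and hence, via Fact~\ref{fact}, $(1-p)^{m-1}\ge\exp\bigl(-(m-1)/( \log(k-M+1))\bigr)$ is bounded below by a positive constant. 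So $g(m)\ge\Omega(1/\log k)$ is immediate; there is no ``collision-dominated'' regime to worry about, and you must \emph{not} invoke AT-steps or Lemma~\ref{lemma:dense} here, since the BT lemma is being proved under the hypothesis that the density has \emph{already} been reduced to at most $M$.

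Once you observe $M=c\ln(1+k)$, the paper's route is in fact slightly simpler than your Chernoff argument: it fixes a single remaining message and bounds the probability that \emph{that} message is not delivered in any of the $T_0$ BT-steps by
\[
\left(1-\frac{1}{1+\log(k+1)}\Bigl(1-\frac{1}{1+\log(k-M+1)}\Bigr)^{M-1}\right)^{\xi\log k\ln(1+k)},
\]
then uses Fact~\ref{fact} twice to reduce the requirement to $\xi\ge\frac{1+\log(k+1)}{\log k}\exp\!\bigl((M-1)/\log(k-M+1)\bigr)$, which is at most a constant precisely because $M=c\ln(1+k)$. Your counting-of-successes approach also works, but the decisive step in either version is the one you overlooked: recognising that the lemma's own definition of $M$ makes it $\Theta(\log k)$.
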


\begin{proof}
Let $\sigma(t)$ be the number of messages delivered up to BT-step $t$.
Then, the probability that a given message is not delivered at BT-step $t$ is
\begin{align*}
1-\frac{1}{1+\log(\sigma(t)+1)}\left(1-\frac{1}{1+\log(\sigma(t)+1)}\right)^{k-\sigma(t)-1}.
\end{align*}

Which, given that $\sigma(t)\geq k-M$, is at most
\begin{align*}
1-\frac{1}{1+\log(k+1)}\left(1-\frac{1}{1+\log(k-M+1)}\right)^{M-1}.
\end{align*}

Therefore, the probability that a given message is not delivered for $\xi\log k\ln(1+k)$ BT-steps is at most
\begin{align*}
\left(1-\frac{1}{1+\log(k+1)}\left(1-\frac{1}{1+\log(k-M+1)}\right)^{M-1}\right)^{\xi\log k\ln(1+k)}. 
\end{align*}

Thus, we want to show, 
\begin{multline*}
\left(1-\frac{1}{1+\log(k+1)}\left(1-\frac{1}{1+\log(k-M+1)}\right)^{M-1}\right)^{\xi\log k\ln(1+k)} \\ \leq 1/(1+k).
\end{multline*}

Using Fact~\ref{fact} twice,
\begin{align*}
\xi &\geq \frac{1+\log(k+1)}{\log k}\exp\left(\frac{M-1}{\log(k-M+1)}\right)
\end{align*}

Since $M=c\ln(1+k)$, for some constant $c$,
\begin{align*}
\xi &\geq \frac{1+\log(k+1)}{\log k}\exp\left(\frac{c\ln(1+k)-1}{\log(k-c\ln(1+k)+1)}\right)
\end{align*}

Which is at most a constant.
\qed
\end{proof}


\end{document}